\newcommand{\mc}{\mathcal}
\newcommand{\cp}{\times}
\newcommand{\bol}{\boldsymbol}
\newcommand{\w}{\wedge}
\newcommand{\lr}[1]{\left({#1}\right)}
\newcommand{\lrs}[1]{\left[{#1}\right]}
\newcommand{\lrc}[1]{\left\{{#1}\right\}}
\newcommand{\mf}{\mathfrak}
\newcommand{\p}{\partial}
\newcommand{\ti}[1]{\textit{#1}}
\newcommand{\tb}[1]{\textbf{#1}}
\newtheorem{theorem}{\textit{Theorem}}
\newtheorem{proposition}{\textit{Proposition}}
\begin{document}

\title{Generalization of Hamiltonian Mechanics to a Three-Dimensional Phase Space}
\author[1]{Naoki Sato} 
\affil[1]{Graduate School of Frontier Sciences, \protect\\ The University of Tokyo, Kashiwa, Chiba 277-8561, Japan \protect\\ Email: sato\_naoki@edu.k.u-tokyo.ac.jp}
\date{\today}
\setcounter{Maxaffil}{0}
\renewcommand\Affilfont{\itshape\small}

    \maketitle
    \begin{abstract}
		Classical Hamiltonian mechanics is realized by the action of a Poisson bracket 
		on a Hamiltonian function. The Hamiltonian function is a constant of motion (the energy) of the system. 
		The properties of the Poisson bracket are encapsulated in the symplectic $2$-form, a closed
		second order differential form. Due to closure, the symplectic $2$-form is preserved by the Hamiltonian flow,  
		and it assigns an invariant (Liouville) measure on the phase space through the Lie-Darboux theorem.  
		In this paper we propose a generalization of classical Hamiltonian mechanics to a three-dimensional phase space:
		the classical Poisson bracket is replaced with a generalized Poisson bracket acting on a pair of Hamiltonian functions, 
		while the symplectic $2$-form is replaced by a symplectic $3$-form. 
		We show that, using the closure of the symplectic $3$-form, a result analogous to the classical Lie-Darboux theorem holds: 
		locally, there exist smooth coordinates such that the components of the symplectic $3$-form are constants, and the
		phase space is endowed with a preserved volume element. 
		Furthermore, as in the classical theory, the Jacobi identity for the generalized Poisson bracket 
		mathematically expresses the closure of the associated symplectic form. 
		As a consequence, constant skew-symmetric third order contravariant tensors always define generalized Poisson brackets. 
		This is in contrast with generalizations of Hamiltonian mechanics postulating the  
	  fundamental identity as replacement for the Jacobi identity. 
		In particular, we find that the fundamental identity represents a stronger requirement than 
		the closure of the symplectic $3$-form. 
		\end{abstract}

\section{Introduction}
In the bracket formalism \cite{Morrison80,Morrison82,Olv}, 
a classical Hamiltonian system 
is defined as a pair 
consisting of a smooth manifold $\mc{M}$ of dimension $n$ and a Poisson bracket
$\left\{\circ,\circ\right\}:C^{\infty}\lr{\mc{M}}\times C^{\infty}\lr{\mc{M}}\rightarrow C^{\infty}\lr{\mc{M}}$ on the ring of real valued smooth functions $C^{\infty}\lr{\mc{M}}$ on $\mc{M}$. The Poisson bracket satisfies the following axioms:
\begin{subequations}\label{axiomsPB}
\begin{align}
&\left\{aF+bG,H\right\}=a\left\{F,H\right\}+b\left\{G,H\right\},~~~~\left\{H,aF+bG\right\}=a\left\{H,F\right\}+b\left\{H,G\right\},\label{bilinearity}\\
&\left\{F,F\right\}=0,\label{alternativity}\\
&\left\{F,G\right\}=-\left\{G,F\right\},\label{antisymmetry}\\
&\left\{FG,H\right\}=F\left\{G,H\right\}+\left\{F,H\right\}G,\label{Leibniz}\\
&\left\{F,\left\{G,H\right\}\right\}+\left\{G,\left\{H,F\right\}\right\}+\left\{H,\left\{F,G\right\}\right\}=0,\label{JI}
\end{align}
\end{subequations}
for all $a,b\in\mathbb{R}$ and $F,G,H\in C^{\infty}\lr{\mc{M}}$. 
Equations \eqref{bilinearity}, \eqref{alternativity}, \eqref{antisymmetry},
\eqref{Leibniz}, and \eqref{JI} express bilinearity, alternativity, 
antisymmetry, Leibniz rule, and Jacobi identity respectively. 
Bilinearity \eqref{bilinearity} guarantees that the Poisson bracket defines an algebra over $\mathbb{R}$. Given an observable $F\in C^{\infty}\lr{\mc{M}}$ and an Hamiltonian (energy) $H\in C^{\infty}\lr{\mc{M}}$, the time evolution of $F$ is expressed as
\begin{equation}
\frac{dF}{dt}=\left\{F,H\right\}=F_i\mc{J}^{ij}H_j.
\end{equation}
In this notation $\mc{J}^{ij}$ is a skew-symmetric second order contravariant tensor (the Poisson operator associated with the Poisson bracket), summation on repeated indexes is used, and lower indexes applied to a function specify partial derivatives, e.g. $F_i=\p F/\p x^i$ with $x^i$ the $i$th coordinate on $\mc{M}$, $i=1,...,n$.   
Then, alternativity \eqref{alternativity} ensures that energy is conserved, since $dH/dt=\left\{H,H\right\}=0$. Antisymmetry \eqref{antisymmetry} follows from the axioms \eqref{bilinearity} and \eqref{alternativity}. 
The Leibniz or derivation rule \eqref{Leibniz} implies that the Poisson bracket
behaves as a derivation since $d\lr{FG}/dt=F\lr{dG/dt}+\lr{dF/dt}G$.
Finally, the Jacobi identity \eqref{JI} is equivalent to the existence
of a closed $2$-form of even rank $2m=n-s$, the symplectic $2$-form $\omega$. 
Here, $s$ is the dimension of the kernel of the Poisson bracket. 
Due to the Lie-Darboux theorem \cite{Littlejohn82,Arnold89,DeLeon89}, the closure of the $2$-form $\omega$  
further implies that the phase space is locally spanned by $2m=n-s$ canonically conjugated variables $\lr{p^i,q^i}$, $i=1,...,m$, and $s=n-2m$ Casimir invariants $C^i$, $i=1,...,s$, which fill the center (kernel) of the Poisson bracket:
\begin{equation}
\frac{dC^i}{dt}=\left\{C^i,H\right\}=0~~~~\forall H\in C^{\infty}\lr{\mc{M}},~~~~i=1,...,s.
\end{equation}
The naming Casimir invariant used in this context originates from the Lie algebra associated with angular momentum \cite{Morrison98}. 
The equations of motion therefore take the local canonical form
\begin{equation}
\frac{dp^i}{dt}=-\frac{\p H}{\p q^i},~~~~\frac{dq^i}{dt}=\frac{\p H}{\p p^i},~~~~\frac{dC^j}{dt}=0,~~~~i=1,...,m,~~j=1,...,s.\label{HCEq}
\end{equation}

In his 1973 paper \cite{Nambu}, Nambu proposed 
what he calls a `possible generalization of classical Hamiltonian dynamics to a three-dimensional phase space'. 
Here, a three-dimensional phase space means that the classical canonical pair $\lr{p,q}$ is replaced by a canonical triplet $\lr{p,q,r}$, while the number of generating functions (Hamiltonians) is increased to two. 
Then, for the basic $n=3$ case with Hamiltonians $G$ and $H$ , 
in place of Hamilton's canonical equations, Nambu's canonical equations are introduced as follows:
\begin{equation}
\frac{dp}{dt}=\frac{\p G}{\p q}\frac{\p H}{\p r}-\frac{\p G}{\p r}\frac{\p H}{\p q},~~~~\frac{dq}{dt}=\frac{\p G}{\p r}\frac{\p H}{\p p}-\frac{\p G}{\p p}\frac{\p H}{\p r},~~~~\frac{dr}{dt}=\frac{\p G}{\p p}\frac{\p H}{\p q}-\frac{\p G}{\p q}\frac{\p H}{\p p}.\label{NC}
\end{equation}
Setting $\lr{x^1,x^2,x^3}=\lr{p,q,r}$, system \eqref{NC} can be written through a ternary operation (Nambu bracket) 
$\left\{\circ,\circ,\circ\right\}:C^{\infty}\lr{\mc{M}}\cp C^{\infty}\lr{\mc{M}}\cp C^{\infty}\lr{\mc{M}}\rightarrow C^{\infty}\lr{\mc{M}}$ as below:
\begin{equation}
\frac{dx^i}{dt}=\left\{x^i,G,H\right\}=\epsilon^{ijk}G_jH_k,~~~~i=1,2,3.
\end{equation}
In this equation, $\epsilon^{ijk}$ denotes the three-dimensional Levi-Civita symbol.
More generally, the evolution of an observable $F\in C^{\infty}\lr{\mc{M}}$ takes the form
\begin{equation}
\frac{dF}{dt}=\left\{F,G,H\right\}=\epsilon^{ijk}F_iG_jH_k.\label{NB}
\end{equation}
Notice that due to the skew-symmetry of $\epsilon^{ijk}$ both $G$ and $H$ are constant in time. 

The generalization of the Nambu bracket occurring in equation \eqref{NB} 
to an algebraic framework analogous to the Poisson bracket of classical Hamiltonian mechanics 
has proven difficult, especially in the context of quantization, because 
the generalization of the Poisson bracket axioms \eqref{axiomsPB} (in particular, the generalization of the Jacobi identity \eqref{JI}) to the Nambu bracket is nontrivial \cite{Ho}.
Historically, Nambu dynamics was first placed in a Lie algebraic setting in \cite{Bial91}.  
Here, the authors considered the properties of the bracket $\left\{\circ,\circ\right\}_G=\left\{\circ,G,\circ\right\}$ obtained by
fixing one of the Hamiltonians, and showed that $\left\{\circ,\circ\right\}_G$ satisfies the Jacobi identity of Poisson brackets 
when the Nambu bracket is defined through structure constants of Lie-Poisson brackets and $G$ is the corresponding Casimir invariant (note that this construction is not limited to the Lie algebra of the rotation group chosen by Nambu). This problem
is intimately related to the generalization of the Jacobi identity to Nambu mechanics because 
it is desirable that a triple bracket with an equivalent Poisson structure satisfies the generalized identity. 
In \cite{Bial91}, the authors also derived a field theoretic (infinite-dimensional) Nambu bracket though the Lie
algebra associated with the Weyl-Wigner representation (these aspects are also reviewed in \cite{Morrison98,Bloch13}). 
Several authors \cite{Ho,Tak} have proposed the following set of axioms for the Nambu bracket: 
trilinearity, skew-symmetry, Leibniz rule, and fundamental identity. Respectively, they are written as
\begin{subequations}\label{NAx}
\begin{align}
\left\{a F_1+b F_2,F_3,F_4\right\}=&a\left\{F_1,F_3,F_4\right\}+b\left\{F_2,F_3,F_4\right\},\label{tri}\\
\left\{F_1,F_2,F_3\right\}=&\epsilon^{ijk}\left\{F_i,F_j,F_k\right\},~~~~i,j,k=1,2,3~~~~({\rm not~summed}),\label{ssn}\\
\left\{F_1F_2,F_3,F_4\right\}=&F_1\left\{F_2,F_3,F_4\right\}+F_2\left\{F_1,F_3,F_4\right\},\label{LeibN}\\
\left\{\left\{F_1,F_2,F_3\right\},F_4,F_5\right\}=&\left\{\left\{F_1,F_4,F_5\right\},F_2,F_3\right\}
+\left\{F_1,\left\{F_2,F_4,F_5\right\},F_3\right\}+\left\{F_1,F_2,\left\{F_3,F_4,F_5\right\}\right\},\label{FI}
\end{align}
\end{subequations}   
for all $F_1,F_2,F_3,F_4,F_5\in C^{\infty}\lr{\mc{M}}$ and $a,b\in\mathbb{R}$. 
Observe that in \eqref{tri} the linearity condition on the second and third argument has been omitted.
As in the Poisson bracket case, linearity in the other arguments follows from linearity in the first argument and skew-symmetry. 
Alternativity has been absorbed in the skew-symmetry axiom \eqref{ssn}.
In the Leibinz (derivation) rule \eqref{LeibN}, the convention is adopted that $F_1$ and $F_2$ both appear before the Nambu bracket on the right-hand side of the equation.
Finally, the fundamental identity \eqref{FI}, which replaces the Jacobi identity \eqref{JI} for the Poisson bracket, 
implies distribution of derivatives (a property satisfied by Poisson brackets), i.e. given two Hamiltonians $F_4$ and $F_5$ one has
\begin{equation}
\frac{d}{dt}\left\{F_1,F_2,F_3\right\}=\left\{\frac{dF_1}{dt},F_2,F_3\right\}+\left\{F_1,\frac{dF_2}{dt},F_3\right\}+\left\{F_1,F_2,\frac{dF_3}{dt}\right\}.
\end{equation}
The fundamental identity \eqref{FI} leads to the property that 
the bracket $\left\{\circ,\circ\right\}_G=\left\{\circ,G,\circ\right\}$ defined by fixing 
the second entry with a given generating function $G$ assigns a Poisson algebra.
This can be verified by observing that equation \eqref{FI} reduces to \eqref{JI} when $F_1=F_4=G$. 
However, the fundamental identity \eqref{FI} also leads to the result that constant skew-symmetric 3-tensors 
do not define a Nambu bracket in general, a situation pointing to the fact that the axiom \eqref{FI} 
is more stringent than the Jacobi identity \eqref{JI} required for a Poisson bracket (on this point, see \cite{Tak}).

As discussed in \cite{Ho,Tak,Cha}, the axioms \eqref{NAx} can be further generalized to the $N$-ary case ($N\leq n$) in which
the evolution of an observable $F\in C^{\infty}\lr{\mc{M}}$ in $n$-dimensional phase space is determined by $N-1$ Hamiltonians, $H_1,...,H_{N-1}\in C^{\infty}\lr{\mc{M}}$ as 
\begin{equation}
\frac{dF}{dt}=\left\{F,H_1,...,H_{N-1}\right\}.
\end{equation}
The main case discussed by Nambu corresponds to $n=N=3$. 
More generally, Nambu dynamics with $n=N$ is associated with integrable systems endowed with $n-1$ constants of motion.  

In this paper, we wish to address the following question: `How can we generalize Hamiltonian mechanics to allow triplets (and possibly $N$-tuples, $N\geq 2$) of variables as the building blocks for the phase space?'  
To achieve this goal, we shall not consider the problem of quantization (see \cite{Awata,Dito} on the difficulties encountered in the quantization of Nambu brackets defined by \eqref{NAx}), 
but instead focus on the minimal differential-geometric structure that should be possessed by a generalized phase space, 
a closed symplectic form of degree $N$. 
We find that, for $N>2$, the mathematical identity (Jacobi identity) expressing the closure of the symplectic form ceases to coincide with the fundamental identity (the bracket identity \eqref{FI} proposed as generalization of the Jacobi identity \eqref{JI} for the classical Poisson bracket). Hence, the fundamental identity represents a stronger condition than what is required for a system
to be Hamiltonian in the geometric sense above, and the fundamental identity \eqref{FI} is replaced with a weaker condition
expressing the closure of the relevant symplectic $N$-form. 
Here, the bracket formalism is not essential to the definition of the generalized Hamiltonian framework, 
and all bracket axioms are deducible from the underlying symplectic structure of the phase space.

Although the construction of generalized Hamiltonian mechanics presented here does not come with an immediate pathway to quantization, 
it possesses several merits: any constant skew-symmetric 3-tensor always satisfies the Jacobi identity
and therefore represents a generalized Poisson operator, and  a theorem, analogous to the classical Lie-Darboux theorem
for closed $2$-forms, holds, implying that the phase space has a local set of coordinates that define an invariant (Liouville) measure. 
We also take the view that the applicability of any generalization of Hamiltonian mechanics 
should not be limited to integrable dynamical systems with $n=N$. Instead, it should include systems with arbitrary dimensionality $n\geq N$ and a desired number of Hamiltonians $H_1,...,H_{N-1}$, $N-1\geq 1$. 
The theory discussed here applies to $n$-dimensional systems ($n\geq 3$) with $N-1=2$ Hamiltonians, and 
the formalism has a potential extension to the general case with $n\geq N$ dimensions and $N-1\geq 1$ Hamiltonians. 

One may wonder whether such generalizations of Hamiltonian mechanics 
represent redundant formulations of the classical theory, or 
whether they bring physical insight into practical problems.  
The relationship between classical Hamiltonian mechanics and  
the Nambu bracket defined through the axioms \eqref{NAx} has been studied by 
several authors \cite{Bayen,Yon,Hor1,Hor2},  
who have highlighted the range of interchangeability of the formalisms 
and suggested applications to statistical mechanics, where the classical Liouville
measure is replaced by the invariant measure associated with the volume element of Nambu's phase space 
(an invariant measure is needed for the definition of entropy and the applicability of the
ergodic ansatz \cite{StatMech,Ergodic}), quantum mechanics, and fluid mechanics \cite{Blender,Nevir}. 
As it will be shown in the following sections, the geometric generalization of Hamiltonian mechanics presented here 
has the property that, 
exception made for the three-dimensional case $n=3$, 
the bracket obtained by fixing one of the constants, say $G$, according to
$\left\{\circ,\circ\right\}_G=\left\{\circ,G,\circ\right\}$
fails to define a Poisson bracket. 
This fact holds true even when the generalized Poisson operator $\mc{J}^{ijk}$ generating the dynamics as $dF/dt=\left\{F,G,H\right\}=\mc{J}^{ijk}F_iG_jH_k$
can be transformed to a tensor whose components are given by Levi-Civita symbols (one for each canonical triplet).
This suggests that there may be systems that possess a generalized Hamiltonian structure, but that
do not exhibit a classical phase space. This fact would make the theory indispensable 
if such systems were found in the natural world.  

The present paper is organized as follows.
In section 2 we review the geometric formulation of classical Hamiltonian mechanics.
In section 3, based on the same geometric construction, 
we define a generalization of Hamiltonian mechanics to dynamical systems possessing two Hamiltonians $G$ and $H$ , i.e. $N-1=2$ and $n\geq 3$. 
In section 4 we discuss the relationship between the closure of symplectic forms and the Jacobi identity. 
In particular, we show that the fundamental identity \eqref{FI} appearing in the set of axioms \eqref{NAx} ceases to 
coincide with the closure of the symplectic form when $N\geq3$.
In section 5, we extend the classical Lie-Darboux theorem, 
and show that phase space is locally spanned by a set of coordinates defining an invariant measure. 
In these coordinates, the components of the symplectic $3$-form are constants.  
The conditions for the local existence of canonical triplets are also discussed. 
Concluding remarks are given in section 6.

Finally, we remark again that, for clarity of exposition, we are concerned only with the generalization of Hamiltonian mechanics 
to systems with two Hamiltonians. Nevertheless, systems with a higher number of Hamiltonians $N-1>2$ 
are expected to admit analogous constructions. 

\section{Classical Hamiltonian Mechanics}
For simplicity, we restrict our attention to Euclidean space, $\mc{M}=\mathbb{R}^n$. 
Consider a smooth bounded domain $\Omega\subset\mathbb{R}^n$. 
Let $\bol{x}$ denote the position vector in $\mathbb{R}^n$, $\lr{x^1,...,x^n}$ a coordinate system in $\Omega$, 
$\lr{\p_1,...,\p_n}$ the associated tangent basis, $T\Omega$ the tangent space to $\Omega$, and $T^\ast\Omega$ the cotangent space to $\Omega$. 

A bivector field $\mc{J}\in\bigwedge^2T\Omega$ with expression
\begin{equation}
\mc{J}=\sum_{i<j}\mc{J}^{ij}\p_i\w\p_j,
\end{equation}
where $\mc{J}^{ij}\in C^{\infty}\lr{\Omega}$, $i,j=1,...,n$, 
are the components of a skew-symmetric second order contravariant tensor, 
is called a Poisson operator provided that 
the Jacobi identity
\begin{equation}
\mc{J}^{im}\frac{\p\mc{J}^{jk}}{\p x^m}+\mc{J}^{jm}\frac{\p\mc{J}^{ki}}{\p x^{m}}+\mc{J}^{km}\frac{\p\mc{J}^{ij}}{\p x^{m}}=0,~~~~i,j,k=1,...,n,\label{JI2} 
\end{equation}
is satisfied. 
Using the Poisson bracket $\left\{F,G\right\}=F_i\mc{J}^{ij}G_j$, one can verify that
equation \eqref{JI2} is equivalent to \eqref{JI}.

Given a smooth vector field $X=X^i\p_i\in T\Omega$, equations of motion are assigned as follows:
\begin{equation}
\frac{d\bol{x}}{dt}=X.\label{EoM}
\end{equation}
The vector field $X$ defines a Hamiltonian system whenever there exist 
a smooth function (Hamiltonian) $H\in C^{\infty}\lr{\Omega}$ 
and a Poisson operator $\mc{J}$ such that
\begin{equation}
X^i=\mc{J}^{ij}H_j,~~~~i=1,...,n.\label{HEoM} 
\end{equation}
The central result stemming from the Jacobi identity \eqref{JI2} 
is the so-called Lie-Darboux theorem \cite{Littlejohn82,Arnold89,DeLeon89},
which can be stated as follows:
\begin{theorem}
Let $\mc{J}\in\bigwedge^2T\Omega$ denote a smooth bivector field of rank $2m$ on a smooth manifold $\Omega$ of dimension $n=2m+s$, $s\geq 0$. Assume that the Jacobi identity \eqref{JI2} holds. Then, for every point $\bol{x}\in\Omega$ 
there exist a neighborhood $U\subset\Omega$ of $\bol{x}$ 
and local coordinates $\lr{p^1,...,p^m,q^1,...,q^m,C^1,...,C^s}\in C^{\infty}\lr{U}$ with associated tangent basis 
$\lr{\p_{p^1},...,\p_{p^m},\p_{q^1},...,\p_{q^m},\p_{C^1},...,\p_{C^s}}$ such that
\begin{equation}
\mc{J}=\sum_{i=1}^m \p_{q^i}\w \p_{p^i}~~~~{\rm in}~~U.\label{LDThmJ}
\end{equation}   
\end{theorem}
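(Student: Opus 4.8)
The plan is to prove the theorem by induction on the rank $2m$, following the strategy of Weinstein's splitting theorem. The engine of the whole argument is the observation that the Jacobi identity \eqref{JI2} is precisely the statement that the assignment $f\mapsto X_f$, where $X_f^i=\mc{J}^{ij}f_j$ is the Hamiltonian vector field of $f$, is a Lie algebra homomorphism, i.e. $[X_f,X_g]=X_{\left\{f,g\right\}}$ for all $f,g\in C^{\infty}\lr{\Omega}$. First I would establish this bracket identity by a direct computation from \eqref{JI2}; it converts every algebraic statement about the bracket into a statement about commuting flows, which can then be rectified.

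For the base case, if $m=0$ then $\mc{J}\equiv 0$ and any coordinate system furnishes $C^1,\dots,C^s$ with nothing to prove. So assume $m\geq 1$ and fix a point $\bol{x}\in\Omega$. Since the rank of $\mc{J}$ is $2m\geq 2$ at $\bol{x}$, some bracket is nonzero there; rescaling, I can produce a function $q^1$ whose Hamiltonian field $X_{q^1}$ is nonvanishing at $\bol{x}$. By the flow-box (straightening) theorem I can then choose a function $p^1$ transverse to the flow of $X_{q^1}$ so that, after normalizing the sign, $\left\{p^1,q^1\right\}=-1$ in a neighborhood of $\bol{x}$. The differentials $dp^1,dq^1$ are then independent there, and the two Hamiltonian fields $X_{p^1},X_{q^1}$ are pointwise independent.

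Next I would invoke the homomorphism property: because $\left\{p^1,q^1\right\}$ is constant, $[X_{p^1},X_{q^1}]=X_{\left\{p^1,q^1\right\}}=0$, so $X_{p^1}$ and $X_{q^1}$ commute. Two commuting, pointwise independent vector fields generate a local $\mathbb{R}^2$-action, and by the simultaneous rectification theorem I can complete $\lr{p^1,q^1}$ to a coordinate system $\lr{p^1,q^1,z^3,\dots,z^n}$ in which the transverse coordinates $z^3,\dots,z^n$ are chosen to be joint invariants of both flows. By construction this forces $\left\{p^1,z^a\right\}=\left\{q^1,z^a\right\}=0$ for all $a=3,\dots,n$, so the only nonzero mixed bracket among $\lr{p^1,q^1}$ and the $z^a$ is $\left\{p^1,q^1\right\}=-1$.

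The heart of the proof, and the step I expect to be the main obstacle, is the splitting: showing that the transverse components $K^{ab}=\left\{z^a,z^b\right\}$ depend only on $z^3,\dots,z^n$ and define a Poisson tensor of rank $2m-2$ in their own right. This is where the Jacobi identity must be used a second time. Applying \eqref{JI2} (equivalently \eqref{JI}) to the triple $p^1,z^a,z^b$ and using $\left\{p^1,z^a\right\}=\left\{p^1,z^b\right\}=0$ yields $\left\{p^1,K^{ab}\right\}=0$, and similarly $\left\{q^1,K^{ab}\right\}=0$; since $X_{p^1}$ and $X_{q^1}$ span the $\lr{p^1,q^1}$-directions, this means $K^{ab}$ is independent of $p^1$ and $q^1$. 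Consequently $\mc{J}=\p_{q^1}\w\p_{p^1}+\mc{J}'$, where $\mc{J}'=\sum_{a<b}K^{ab}\lr{z}\,\p_{z^a}\w\p_{z^b}$ involves only the transverse coordinates. Restricting \eqref{JI2} to indices in $\left\{3,\dots,n\right\}$ shows $\mc{J}'$ again satisfies the Jacobi identity, and the block-diagonal structure forces its rank to be exactly $2m-2$. Applying the inductive hypothesis to $\mc{J}'$ on the $\lr{n-2}$-dimensional transverse space produces the remaining canonical pairs $\lr{p^2,q^2},\dots,\lr{p^m,q^m}$ and the Casimirs $C^1,\dots,C^s$, and assembling these with $\lr{p^1,q^1}$ gives \eqref{LDThmJ}.
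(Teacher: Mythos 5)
Your proof is correct, but it takes a genuinely different route from the one the paper relies on. The paper does not prove Theorem 1 directly: it defers to the cited literature and explains (end of Section 2) that one first identifies the Casimir invariants spanning the kernel of $\mc{J}$, restricts to a Casimir leaf where the dual $2$-form $\omega$ is nondegenerate, and then invokes Theorem 2, whose proof --- as the paper itself demonstrates in the $3$-form setting of Theorem 3 --- is Moser's homotopy method: interpolate $\omega_t=\omega_0+t\lr{\omega-\omega_0}$, solve $i_{X_t}\omega_t=-\sigma_t$ using the Poincar\'e lemma and invertibility, pull back to freeze the coefficients, and finish with the linear normal form for constant skew-symmetric matrices to extract the canonical pairs. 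You instead run the classical flow-box induction (Weinstein splitting at constant rank): you use the Jacobi identity in the form $[X_f,X_g]=X_{\left\{f,g\right\}}$ to manufacture a commuting, pointwise independent pair $X_{p^1},X_{q^1}$, rectify them simultaneously, show via a second application of Jacobi that the transverse bracket $K^{ab}$ is independent of $\lr{p^1,q^1}$, and induct on the rank. Your steps all check out, including the sign convention ($\left\{p^1,q^1\right\}=-1$ is what $\mc{J}=\p_{q^1}\w\p_{p^1}$ requires under the paper's convention $\left\{F,G\right\}=F_i\mc{J}^{ij}G_j$), and your argument has the advantage of producing Casimirs and canonical pairs in a single self-contained induction without ever passing to the dual form or separating off the leaves. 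What Moser's method buys --- and the reason the paper is organized around it --- is that it survives the passage to symplectic $3$-forms, where your induction does not: there is no analogue of the flow-box step that would produce canonical triplets, which is precisely the obstruction the paper identifies at the end of Section 5.
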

\noindent Equation \eqref{LDThmJ} implies that, in $U$,
\begin{subequations}
\begin{align}
\frac{dp^i}{dt}=&i_{X}dp^i=\mc{J}\lr{dp^i,dH}=\sum_{k=1}^m\lr{\frac{\p p^i}{\p q^k}\frac{\p H}{\p p^k}-\frac{\p p^i}{\p p^k}\frac{\p H}{\p q^k}}=-\frac{\p H}{\p q^i},\\
\frac{dq^i}{dt}=&i_Xdq^i=\mc{J}\lr{dq^i,dH}=\sum_{k=1}^m\lr{\frac{\p q^i}{\p q^k}\frac{\p H}{\p p^k}-\frac{\p q^i}{\p p^k}\frac{\p H}{\p q^k}}=\frac{\p H}{\p p^i},\\
\frac{dC^j}{dt}=&i_XdC^j=\mc{J}\lr{dC^j,dH}=\sum_{k=1}^m\lr{\frac{\p C^j}{\p q^k}\frac{\p H}{\p p^k}-\frac{\p C^j}{\p p^k}\frac{\p H}{\p q^k}}=0,
\end{align}
\end{subequations} 
where $i=1,...,m$ and $j=1,...,s$. In this notation $i$ represents the contraction operator (interior product), and the bivector field $\mc{J}$ 
acts on arbitrary $1$-forms $dF$ and $dG$ according to $\mc{J}\lr{dF}=\mc{J}^{ij}F_j\p_i$ and $\mc{J}\lr{dG,dF}=\mc{J}^{ij}G_iF_j$.
Hence, the equations of motion \eqref{HEoM} can be locally written in the canonical form \eqref{HCEq}.
Notice that any differentiable function $f\lr{C^1,...,C^s}$
of the coordinates $C^j$, $j=1,...,s$, is preserved by the Hamiltonian system \eqref{HEoM} regardless
of the specific form of the Hamiltonian $H$. Indeed, 
\begin{equation}
\mc{J}\lr{dC^j}=\sum_{i=1}^m\lr{\frac{\p C^j}{\p p^i}\p_{q^i}-\frac{\p C^j}{\p q^i}\p_{p^i}}=0,~~~~j=1,...,s.
\end{equation} 
The coordinates $C^j$, $j=1,...,s$, are called Casimir invariants. 

Consider the $2m$-dimensional submanifold 
$\Omega_C\subset\Omega$ defined by $\Omega_C=\left\{\bol{x}\in\Omega~\vert~C^j=c^j, j=1,...,s\right\}$ 
with $c^j\in\mathbb{R}$, $j=1,...,s$. Let $H_C=H\lr{p^1,...,p^m,q^1,...,q^m,c^1,...,c^s}$
be the value of the Hamiltonian $H$ on $\Omega_C$. Then, in $\Omega_C$, the equations of motion \eqref{HEoM} can be cast in the form
\begin{equation}
i_{X}\omega=-dH_C,\label{CHM}
\end{equation}
where $\omega\in\bigwedge^2 T^\ast\Omega$ is the closed $2$-form with local expression
\begin{equation}
\omega=\sum_{i=1}^m dp^i\w dq^{i}.\label{LDThm2}
\end{equation} 
Equation \eqref{LDThm2} represents an alternative statement of the Lie-Darboux theorem. More precisely, one has:
\begin{theorem}
Given a smooth closed $2$-form $\omega$ of rank $2m$ on a smooth manifold $\Omega$ of dimension $n=2m+s$, $s\geq0$,  
for every point $\bol{x}\in\Omega$ there exist a neighborhood $U\subset\Omega$ of $\bol{x}$ 
and local coordinates $\lr{p^1,...,p^m,q^1,...,q^m,C^1,...,C^s}\in C^{\infty}\lr{U}$ such that
\begin{equation}
\omega=\sum_{i=1}^m dp^i\w dq^{i}~~~~{\rm in}~~U.\label{LDThm}
\end{equation}   
\end{theorem}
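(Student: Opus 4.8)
The statement is the classical Darboux theorem for a closed $2$-form of constant rank, so the plan is a constructive one: first strip off the $s$-dimensional kernel using closure and the Frobenius theorem, reducing to the nondegenerate (symplectic) case, and then induct on the number $m$ of canonical pairs, peeling off one pair $\lr{p^i,q^i}$ at a time by straightening a Hamiltonian vector field. I favor this flow-based argument over Moser's deformation trick because it exhibits most transparently the geometric role of $d\omega=0$, which is precisely the structure to be generalized to $3$-forms later in the paper.

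First I would dispose of the kernel. Since $\omega$ has constant rank $2m$, its kernel $\ker\omega=\left\{v\in T\Omega\mid i_{v}\omega=0\right\}$ is a smooth distribution of constant dimension $s=n-2m$. Closure makes it involutive: if $i_{u}\omega=i_{v}\omega=0$, then Cartan's formula together with $d\omega=0$ gives $i_{[u,v]}\omega=0$. By the Frobenius theorem there are local functions $\lr{C^1,\dots,C^s}$, completed to a chart, whose leaves $\left\{C^j=\const\right\}$ integrate $\ker\omega$, so that $\p_{C^j}\in\ker\omega$. Then $i_{\p_{C^j}}\omega=0$ forces each $dC^j$ to drop out of $\omega$ and makes the remaining coefficients independent of the $C^j$. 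This leaves a closed, nondegenerate $2$-form on the transverse $2m$-dimensional slice, so it suffices to treat the symplectic case $s=0$.

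For the nondegenerate case I would induct on $m$. Pick any nonconstant $p^1$ and let $X_{p^1}$ be its Hamiltonian field, $i_{X_{p^1}}\omega=-dp^1$; nondegeneracy makes $X_{p^1}$ nonvanishing, so the flow-box theorem supplies a coordinate $q^1$ with $X_{p^1}\lr{q^1}=1$, and then $dp^1,dq^1$ are independent. Writing $X_{q^1}$ for the Hamiltonian field of $q^1$, one has the constant bracket $\left\{p^1,q^1\right\}=\omega\lr{X_{p^1},X_{q^1}}=X_{p^1}\lr{q^1}=1$, whence $[X_{p^1},X_{q^1}]=X_{\left\{p^1,q^1\right\}}=0$. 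Now set $\omega'=\omega-dp^1\w dq^1$. A direct contraction, using $dp^1\lr{X_{p^1}}=dq^1\lr{X_{q^1}}=0$ and the normalization above, shows $i_{X_{p^1}}\omega'=0$ and $i_{X_{q^1}}\omega'=0$. Hence $\omega'$ is closed, annihilates the integrable $2$-plane field spanned by $X_{p^1},X_{q^1}$, and has constant rank $2m-2$. Restricting $\omega'$ to a transversal $\Sigma=\left\{p^1=c_1,\ q^1=c_2\right\}$ and invoking the induction hypothesis yields coordinates $p^2,q^2,\dots,p^m,q^m$ with $\omega'=\sum_{i=2}^m dp^i\w dq^i$ on $\Sigma$; extending them off $\Sigma$ as joint first integrals of the commuting fields $X_{p^1},X_{q^1}$ propagates this identity to a full neighborhood, and adding back $dp^1\w dq^1$ gives $\omega=\sum_{i=1}^m dp^i\w dq^i$ on some $U$.

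The step I expect to be the main obstacle is the reduction $\omega\mapsto\omega'$: one must verify that $\omega'$ genuinely has constant rank $2m-2$ and descends to the transversal $\Sigma$, which is exactly where closure and the commutativity $[X_{p^1},X_{q^1}]=0$ (itself a consequence of $d\omega=0$) are indispensable, and one must extend the coordinates produced on $\Sigma$ to all of $U$ so that the normal form holds on the neighborhood and not merely on the slice. The constant-rank hypothesis is used at every stage and cannot be dropped. I note that the same conclusion is reachable by the dual route through the bivector $\mc{J}$ of Theorem 1, or by Moser's trick applied to $\omega_t=\lr{1-t}\,\omega_0+t\,\omega$ with $\omega_0=\sum_{i=1}^m dp^i\w dq^i$, but the inductive construction above keeps the part played by closure most explicit.
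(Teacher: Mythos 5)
Your argument is correct in substance but follows a genuinely different route from the one the paper relies on: the paper states this theorem without proof, citing the literature, and the proof it actually writes out (for the $3$-form analogue in Theorem 3) is Moser's deformation method --- interpolate $\omega_t=\omega_0+t\lr{\omega-\omega_0}$, use closure and the Poincar\'e lemma to write $d\omega_t/dt=d\sigma_t$, solve $i_{X_t}\omega_t=-\sigma_t$ by invertibility on the nondegenerate directions, and pull back along the resulting isotopy. Your construction --- quotient out the kernel with Frobenius, then peel off one canonical pair at a time via the flow-box theorem and the commuting fields $X_{p^1},X_{q^1}$ --- is the classical Arnold-style induction and is valid; it does make the role of $d\omega=0$ maximally explicit and delivers the canonical pairs directly. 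What Moser's method buys, and the reason the paper adopts it, is that it survives the passage to $3$-forms: your induction ultimately rests on the pointwise linear-algebra fact that a skew-symmetric bilinear form of rank $2m$ admits a symplectic basis, which is precisely the ingredient the paper notes has no analogue for third-order tensors (hence Theorem 3 only normalizes $w$ to constant coefficients, not to canonical triplets). So your proof is the more transparent one for $2$-forms but does not foreshadow the generalization that is the point of the paper. One slip to repair in your first step: the integral leaves of $\ker\omega$ are not the level sets $\left\{C^j=\const\right\}$ (those are $2m$-dimensional, while $\ker\omega$ has dimension $s$); Frobenius instead gives a chart in which $\p_{C^1},\dots,\p_{C^s}$ span $\ker\omega$, so the kernel leaves are the sets on which the transverse coordinates are fixed and the $C^j$ vary, and it is closure that you must then invoke to conclude that the surviving coefficients of $\omega$ are independent of the $C^j$. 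The reduction $\omega\mapsto\omega'=\omega-dp^1\w dq^1$ and the extension off the transversal $\Sigma$ by the commuting flows are standard and correct as you describe them.
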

\noindent The closure of the $2$-form $\omega$, which is called symplectic $2$-form, is expressed by the condition $d\omega=0$. 
In the coordinate system $\lr{x^1,...,x^n}$, the $2$-form $\omega$ has expression
\begin{equation}
\omega=\sum_{i<j}\omega_{ij}dx^i\w dx^j,
\end{equation}
where $\omega_{ij}$ is a skew-symmetric second order covariant tensor, $\omega_{ij}=-\omega_{ji}$, $i,j=1,...,n$. 
Then, the closure condition can be written as 
\begin{equation}
d\omega=\sum_{i<j<k}\lr{\frac{\p\omega_{ij}}{\p x^k}+\frac{\p\omega_{jk}}{\p x^i}+\frac{\p\omega_{ki}}{\p x^j}}dx^i\w dx^j\w dx^k=0.\label{closure}
\end{equation} 
Recalling Cartan's homotopy formula for the Lie derivative of a differential form,   
\begin{equation}
\mf{L}_{X}\omega=\lr{di_X+i_Xd}\omega, 
\end{equation}
and using the closure of $\omega$, 
equation \eqref{CHM} implies that $\omega$ is invariant along the vector field $X$, 
\begin{equation}
\mf{L}_{X}\omega=0.
\end{equation}
In addition to the conservation of the symplectic $2$-form $\omega$, Hamilton's canonical equation \eqref{HCEq} imply that
the phase space (Liouville) measure
\begin{equation}
d\Pi=dp^1\w...\w dp^m\w dq^1\w...\w dq^m\w dC^1\w...\w dC^s,
\end{equation} 
is preserved by the flow $X$, i.e.
\begin{equation}
\mf{L}_{X}d\Pi=0.
\end{equation}
The phase space measure $d\Pi$ is the cornerstone of classical statistical mechanics. 

To summarize, there are two dual geometrical formulations of classical Hamiltonian mechanics, 
one based on the Poisson operator $\mc{J}$, the other centered on the symplectic $2$-form $\omega$. 
The first formulation \eqref{HEoM} has the merit that the role of the kernel of $\mc{J}$, which is spanned by the Casimir invariants, 
is made explicit, while to define a Hamiltonian system by \eqref{CHM} one must require that $X\perp{\rm ker}\lr{\omega}$ 
and that the Hamiltonian is independent of the Casimirs, $\p H/\p C^j=0$, $j=1,...,s$.
This is because a non-empty kernel in $\omega$ 
results in ambiguity in the definition of $X$ (equation \eqref{CHM} is symmetric with respect to transformations $X\rightarrow X+Y$, with $Y\in {\rm ker}\lr{\omega}$).      
On the other hand, it should be noted that the proof of theorem 1 ultimately relies
on the proof of theorem 2 to find local canonical pairs $\lr{p^i,q^i}$, $i=1,...,m$, once 
the Casimir invariants have been identified (see \cite{Littlejohn82}).
The duality of the formulations is reflected in the properties of $\mc{J}$ and $\omega$. 
In particular, notice that the role of the Jacobi identity for the Poisson operator $\mc{J}$ 
is played by the closure of the symplectic $2$-form in the second formulation. 
This equivalence can be seen explicitly by considering the case in which the kernels of $\mc{J}$ and $\omega$ are empty, i.e. $n=2m$, $s=0$. 
Since in this situation the skew-symmetric matrix $\mc{J}^{ij}$ has full rank, it has an inverse. 
Comparing equations \eqref{HEoM} and \eqref{CHM} one sees that the inverse is given by the skew-symmetric matrix $\omega_{ij}$. 
Then, multiplying each side of equation \eqref{JI2} by $\omega_{iu}\omega_{jv}\omega_{kw}$ and summing over $i$, $j$, and $k$, one obtains:
\begin{equation}
\omega_{iu}\omega_{jv}\omega_{kw}\lr{\mc{J}^{im}\frac{\p\mc{J}^{jk}}{\p x^m}+\mc{J}^{jm}\frac{\p\mc{J}^{ki}}{\p x^m}+\mc{J}^{km}\frac{\p\mc{J}^{ij}}{\p x^m}}=\frac{\p\omega_{wv}}{\p x^u}+\frac{\p\omega_{uw}}{\p x^v}+\frac{\p\omega_{vu}}{\p x^w}=0,~~~~u,v,w=1,...,m,\label{closure2}
\end{equation}
which is the closure condition \eqref{closure}.
Conversely, multiplying each side of \eqref{closure2} by $\mc{J}^{iu}\mc{J}^{jv}\mc{J}^{kw}$ and summing over $u$, $v$, and $w$, one has:
\begin{equation}
\mc{J}^{iu}\mc{J}^{jv}\mc{J}^{kw}\lr{\frac{\p\omega_{wv}}{\p x^u}+\frac{\p\omega_{uw}}{\p x^v}+\frac{\p\omega_{vu}}{\p x^w}}=-\lr{\mc{J}^{iu}\frac{\p\mc{J}^{jk}}{\p x^u}+\mc{J}^{jv}\frac{\p\mc{J}^{ki}}{\p x^v}+\mc{J}^{kw}\frac{\p\mc{J}^{ij}}{\p x^w}}=0,~~~~i,j,k=1,...,m, 
\end{equation}
which is the Jacobi identity. Hence, the Jacobi identity and the closure of $\omega$ are mathematically equivalent. 
This fact will be the basic principle for the generalization of Hamiltonian mechanics presented in the following sections.  

\section{Generalized Hamiltonian Mechanics}
As anticipated in the introduction, the discussion will be limited to the generalization of Hamiltonian mechanics
to a three-dimensional phase space, i.e. $N=3$ and $n\geq 3$. Nevertheless, the same construction applies to $N>3$. 
We conjecture that a generalized Hamiltonian system should have the following properties. 
\begin{enumerate}
\item Given two Hamiltonians $G,H\in C^{\infty}\lr{\Omega}$, the equations of motion are given by
\begin{equation}
X^i=\mf{J}^{ijk}G_jH_k=\sum_{j<k}\mf{J}^{ijk}\lr{G_jH_k-G_kH_j},~~~~i=1,...,n.\label{GHEoM}
\end{equation}
Here, $\mf{J}^{ijk}\in C^{\infty}\lr{\Omega}$ are the components of a skew-symmetric third order contravariant tensor. 
The corresponding trivector field $\mf{J}\in\bigwedge^3 T\Omega$ is given by
\begin{equation}
\mf{J}=\sum_{i<j<k}\mf{J}^{ijk}\p_i\w\p_j\w\p_k.
\end{equation}
\item The trivector field $\mf{J}$ (generalized Poisson operator) satisfies a generalized Jacobi identity. 
This identity expresses the closure of a smooth $3$-form $w$ (symplectic $3$-form).
\item The symplectic $3$-form $w$ is Lie-invariant,
\begin{equation}
\mf{L}_Xw=0. 
\end{equation}
\item Given a generalized Poisson operator $\mf{J}$ with associated symplectic $3$-form $w$, 
there exists a local coordinate system $\lr{y^1,...,y^n}$ 
such that system \eqref{GHEoM} takes the generalized canonical form
\begin{equation}
\frac{dy^i}{dt}=\sum_{j<k}B^{ijk}\lr{\frac{\p G}{\p y^j}\frac{\p H}{\p y^k}-\frac{\p G}{\p y^k}\frac{\p H}{\p y^j}},~~~~i=1,...,n.\label{NC2}
\end{equation}
Here, $B^{ijk}\in\mathbb{R}$, $i,j,k=1,...,n$, are constant components of a skew-symmetric third order contravariant tensor $B$.  
Furthermore, the generalized Poisson operator and the symplectic $3$-form $w$ have local expressions
\begin{equation}
\mf{J}=\sum_{i<j<k}B^{ijk} \p_{y^i}\w\p_{y^j}\w\p_{y^k},~~~~w=\sum_{i<j<k}A_{ijk}dy^i\w dy^j\w dy^k,\label{Jwloc}
\end{equation}
where $\lr{\p_{y^1},...,\p_{y^n}}$ are tangent vectors and  
$A$ is the constant inverse of $B$ with skew-symmetric covariant components $A_{ijk}\in\mathbb{R}$, $i,j,k=1,...,n$ (the definition of inverse and the conditions of invertibility will be given later). 
When $n=3$, $B^{ijk}=b\epsilon^{ijk}$ for some constant $b\in\mathbb{R}$, and equation \eqref{NC2} gives Nambu's canonical equations \eqref{NC}. 
\item The local coordinate system $\lr{y^1,...,y^n}$ defines a phase space (Liouville) measure
\begin{equation}
d\Xi=dy^1\w...\w dy^n, 
\end{equation}
which is Lie invariant, i.e. 
\begin{equation}
\mf{L}_X d\Xi=0. 
\end{equation}
\end{enumerate}

In essence, the geometric construction outlined above can be thought of as a noncanonical formulation of Nambu mechanics.
For this construction to be consistent, we must show that the analogous of the Lie-Darboux theorem discussed in the previous
section applies to three-dimensional phase spaces. The proof of this theorem, which is given in section 5, 
can be obtained once certain subtleties concerning the notion of inverse and rank for tensors like $\mf{J}^{ijk}$ are settled. 
Then, the properties of the generalized formulation follow automatically.  
 
In the remainder of this section, we describe basic properties of the theory. 
First, notice that the generalized Poisson bracket (Nambu bracket) is defined as
\begin{equation}
\left\{F,G,H\right\}=\mf{J}^{ijk}F_iG_jH_k,~~~~\forall F,G,H\in C^{\infty}\lr{\Omega}.
\end{equation}
Observe that, while trilinearity, skew-symmetry, and Leibniz rule are satisfied by construction, 
the same is not true for the fundamental identity \eqref{FI}. This aspect will be discussed in detail in section 4.

A Casimir invariant $C$ is characterized by the property that it is a constant of the motion for any choice of Hamiltonians $G$ and $H$, i.e. 
\begin{equation}
\frac{dC}{dt}=\left\{C,G,H\right\}=0,~~~~\forall G,H\in C^{\infty}\lr{\Omega}.
\end{equation}  
In terms of the generalized Poisson operator, the equation above implies that the $1$-form $dC$ belongs to ${\rm ker}\lr{\mf{J}}$, 
$\mf{J}\lr{dC}=0$. In components, 
\begin{equation}
\mf{J}^{ijk}C_k=0,~~~~i,j=1,...,n.\label{Cas}
\end{equation}
At this point a remark on null spaces is useful. 
Since $\mf{J}$ is a trivector field, the following scenario may arise:
\begin{equation}
\mf{J}\lr{dC^1,dC^2}=0,~~~~\mf{J}\lr{dC^1}\neq0,~~~~\mf{J}\lr{dC^2}\neq 0.
\end{equation}
Neither $C^1$ or $C^2$ is a Casimir invariant, but they return $0$ when combined. 
For example, consider the following trivector field on $\mathbb{R}^6$,  
\begin{equation}
\mf{J}=\p_1\w\lr{\p_2\w\p_3+\p_4\w\p_5}.
\end{equation}
Clearly, $\mf{J}\lr{dx^6}=0$. However, we also have $\mf{J}\lr{dx^2-dx^4,dx^3+dx^5}=0$ with $\mf{J}\lr{dx^2-dx^4}=\p_1\w\lr{\p_5-\p_3}$ and $\mf{J}\lr{dx^3+dx^5}=\p_1\w\lr{\p_2+\p_4}$. $x^1$ behaves as a Casimir invariant, but $x^2-x^4$ and $x^3+x^5$ do not. We shall refer to quantities like $x^2-x^4$ and $x^3+x^5$ as semi-Casimir invariants. 
If one of the Hamiltonians, say $G$, happens to be a semi-Casimir invariant, 
it follows that the other semi-Casimir invariant is a constant of the motion independent of $H$. 

Suppose that there exists a local coordinate system $\lr{p^1,...,p^m,q^1,...,q^m,r^1,...,r^m,C^1,...,C^s}$ 
with $n=3m+s$ and tangent vectors $\lr{\p_{p^1},...,\p_{\p^m},\p_{q^i},...,\p_{q^m},\p_{r^1},...,\p_{r^m},\p_{C^1},...,\p_{C^s}}$   
such that 
\begin{equation}
\mf{J}=\sum_{i=1}^m\p_{p^i}\w \p_{q^i}\w \p_{r^i}.\label{Jcan}
\end{equation}
In these coordinates, the equations of motion take Nambu's canonical form
\begin{equation}
\frac{dp^i}{dt}=\frac{\p G}{\p q^i}\frac{\p H}{\p r^i}-\frac{\p G}{\p r^i}\frac{\p H}{\p q^i},~~~~
\frac{dq^i}{dt}=\frac{\p G}{\p r^i}\frac{\p H}{\p p^i}-\frac{\p G}{\p p^i}\frac{\p H}{\p r^i},~~~~
\frac{dr^i}{dt}=\frac{\p G}{\p p^i}\frac{\p H}{\p q^i}-\frac{\p G}{\p q^i}\frac{\p H}{\p p^i},~~~~\frac{dC^j}{dt}=0,
\end{equation}
with $i=1,...,m$ and $j=1,...,s$. 
Let $\tilde{\mf{J}}^{abc}$ denote the $\lr{a,b,c}$ component of $\mf{J}$ 
with respect to the coordinate system $\lr{p^1,...,p^m,q^1,...,q^m,r^1,...,r^m,C^1,...,C^s}$.
From equation \eqref{Jcan}, it follows that  
\begin{equation}
\tilde{\mf{J}}^{abc}=E^{abc},~~~~a,b,c=1,...,n,\label{GJc}
\end{equation}
where
\begin{equation}
E^{abc}=
\begin{cases}
\epsilon^{abc} & \text{if}~\sigma\lr{a,b,c}=\lr{i,m+i,2m+i},~~~~i=1,...,m\\
0 & \text{otherwise}.
\end{cases}\label{Jloc}
\end{equation}
In this notation, $\sigma$ is any rearrangement of the indexes $a,b,c$. 
In the next section it will be shown that skew-symmetric third order contravariant tensors with constant
entries satisfy the Jacobi identity. Therefore, \eqref{Jcan} is a generalized Poisson operator.   
Observe that when a generalized Poisson operator can be transformed in the form \eqref{Jcan}, 
its tensor representation can be written in terms of
Levi-Civita symbols. In particular, when there is only one canonical triplet ($m=1$), equation \eqref{Jloc} reduces to
\begin{equation}
\tilde{\mf{J}}^{abc}=\epsilon^{abc},~~~~a,b,c=1,2,3,
\end{equation}
which is the generalized Poisson operator originally introduced by Nambu.  
Let $\left\{\circ,\circ,\circ\right\}_c$ denote the Nambu bracket defined by a generalized Poisson operator
with components given by \eqref{GJc}. We shall refer to such bracket as canonical Nambu bracket.  
Given a generalized Poisson operator $\mf{J}$ with Nambu bracket $\left\{\circ,\circ,\circ\right\}$, 
the local coordinate system $\lr{p^1,...,p^m,q^1,...,q^m,r^1,...,r^m,C^1,...,C^s}$ transforms 
$\left\{\circ,\circ,\circ\right\}$ into $\left\{\circ,\circ,\circ\right\}_c$. 
Fixing one of the Hamiltonians, say $G$, define the bracket
\begin{equation}
\left\{\circ,\circ\right\}_G=\left\{\circ,G,\circ\right\}_c. 
\end{equation}
Let us show that, if $n=3$, $\left\{\circ,\circ\right\}_G$ is a Poisson bracket, and that
the same is not true, in general, for $n>3$. 
The candidate Poisson operator is 
\begin{equation}
\mc{J}_G=\sum_{a<c}\tilde{\mf{J}}^{abc}G_b\p_a\w\p_c=\sum_{a<c}E^{abc}G_b\p_a\w\p_c. 
\end{equation}
When $n=3$, we have two cases, $m=0$ and $m=1$. The case $m=0$ is trivial because the dimension of ${\rm ker}\lr{\mf{J}}$ is $s=n=3$, implying $\mf{J}=0$. If $m=1$, the expression above implies $\mc{J}^{ac}_G=\epsilon^{abc}G_b$. Then, the Jacobi identity \eqref{JI2} reads as 
\begin{equation}
\begin{split}
\mc{J}_G^{am}\frac{\p\mc{J}_{G}^{cd}}{\p x^m}&+\mc{J}_G^{cm}\frac{\p\mc{J}_{G}^{da}}{\p x^m}+\mc{J}_G^{dm}\frac{\p\mc{J}_{G}^{ac}}{\p x^m}\\=&\lr{\epsilon^{abm}\epsilon^{ced}+\epsilon^{cbm}\epsilon^{dea}+\epsilon^{dbm}\epsilon^{aec}}G_bG_{em}\\
=&\epsilon^{cad}\lr{\epsilon^{abm}G_{am}+\epsilon^{cbm}G_{cm}+\epsilon^{dbm}G_{dm}}G_b\\
=&\p_b\left[\epsilon^{cad}\lr{\epsilon^{abm}G_{am}+\epsilon^{cbm}G_{cm}+\epsilon^{dbm}G_{dm}}G\right]-4\epsilon^{cad}\lr{\epsilon^{abm}G_{amb}+\epsilon^{cbm}G_{cmb}+\epsilon^{dbm}G_{dmb}}G\\
=&\p_b\left[\epsilon^{cad}\lr{\epsilon^{abd}G_{ad}+\epsilon^{abc}G_{ac}+
\epsilon^{cba}G_{ca}+\epsilon^{cbd}G_{cd}+\epsilon^{dba}G_{da}+\epsilon^{dbc}G_{dc}}G\right]=0,~~~~a,c,d=1,2,3.
\end{split}
\end{equation}
Hence, for $n=3$, $\left\{\circ,\circ\right\}_G$ is a Poisson bracket. 
Next, suppose that $n>3$. Following similar steps as above with $\mc{J}^{ac}_G=E^{abs}G_b$, one obtains
\begin{equation}
\mc{J}_G^{am}\frac{\p\mc{J}_{G}^{cd}}{\p x^m}+\mc{J}_G^{cm}\frac{\p\mc{J}_{G}^{da}}{\p x^m}+\mc{J}_G^{dm}\frac{\p\mc{J}_{G}^{ac}}{\p x^m}=
\p_b\left[\lr{E^{amb}E^{cde}+E^{cmb}E^{dae}+E^{dmb}E^{ace}}GG_{em}\right],~~~~a,c,d=1,...,n.\label{JIE}
\end{equation}
Due to the skew-symmetry of $\mc{J}_G$, the indexes $a,c,d$ must be different from each other for the right-hand side
of the equation to be different from zero. In addition, 
at least one pair of these three indexes must belong to the same canonical triplet
due to the terms $E^{cde}$, $E^{dae}$, and $E^{ace}$.
If all three indexes belong to the same triplet, 
the corresponding term of the Jacobi identity identically vanishes as in the previous case $n=3$.
Hence, suppose that $a$ and $c,d$ belong to two different canonical triplets $m_1$ and $m_2$. Then, 
equation \eqref{JIE} reduces to
\begin{equation}
\mc{J}_G^{am}\frac{\p\mc{J}_{G}^{cd}}{\p x^m}+\mc{J}_G^{cm}\frac{\p\mc{J}_{G}^{da}}{\p x^m}+\mc{J}_G^{dm}\frac{\p\mc{J}_{G}^{ac}}{\p x^m}=
E^{amb}E^{cde}G_bG_{em}
=\lr{G_\beta G_{\epsilon\mu}-G_\mu G_{\epsilon\beta}},~~~~a,c,d=1,...,n,\label{JIE2}
\end{equation}
where $\lr{a,\mu,\beta}$ and $\lr{c,d,\epsilon}$ are the indexes of the canonical triplets $m_1$ and $m_2$. 
The quantity \eqref{JIE2} does not vanish in general. 
Notice that this result in agreement with the fact that the direct sum of 
canonical Nambu brackets with operator $\mf{J}=\p_1\w\p_2\w\p_3+\p_4\w\p_5\w\p_6+...+\p_{3L-2}\w\p_{3L-1}\w\p_{3L}$ on $\mathbb{R}^{3L}$ 
does not satisfy the fundamental identity \eqref{FI} when $L>1$ (see \cite{Tak}). 
This is because, as mentioned in the introduction, the fundamental identity \eqref{FI} 
for the Nambu bracket $\left\{\circ,\circ,\circ\right\}$ reduces to the Jacobi identity \eqref{JI} for the 
bracket $\left\{\circ,\circ\right\}_G$. 
It is worth observing that, however, the right-hand side of \eqref{JIE2} identically vanishes 
whenever the Hamiltonian $G$ is of the type
\begin{equation}
G=\sum_{i=1}^m G^i\lr{p^i,q^i,r^i,C^1,...,C^s},
\end{equation}
where the $G^i$ are $m$ functions depending only on the $i$th canonical triplet and the Casimir invariants. 

Finally, a remark on the quantization of the theory. 
The definition of Hamiltonian system provided by the present construction is weaker
than that resulting from the axioms \eqref{NAx}. In particular, the fundamental identity
is replaced by a weaker condition, the closure of the symplectic 3-form (see sections 4 and 5 for details). 
Therefore, we expect additional freedom in the derivation of a quantized bracket. 

\section{Symplectic 3-Form and Jacobi identity}
The purpose of the present section is to rigorously formulate the generalization of Hamiltonian mechanics introduced above 
in terms of a closed $3$-form, the symplectic $3$-form $w$, and to obtain the Jacobi identity for the generalized Poisson operator $\mf{J}$ by using the closure of $w$.
Consider again a smooth bounded domain $\Omega\subset\mathbb{R}^n$. 
The core of theory lies in the assumption that, given two Hamiltonians $G,H\in C^{\infty}\lr{\Omega}$, a vector field $X\in T\Omega$ 
defines a generalized Hamiltonian system provided that there exists a smooth $3$-form $w\in\bigwedge^3 T\Omega$ with the 
following properties,  
\begin{equation}
i_Xw=-dH\w dG,\label{w1}
\end{equation}
and
\begin{equation}
dw=0.\label{w2}
\end{equation}
The $3$-from $w$ can be expressed as
\begin{equation}
w=\sum_{i<j<k}w_{ijk}dx^i\w dx^j\w dx^k,\label{w3}
\end{equation}
where $w_{ijk}\in C^{\infty}\lr{\Omega}$, $i,j,k=1,...,n$, are the components of a skew-symmetric third order covariant tensor.  
Using \eqref{w3}, equations \eqref{w1} becomes
\begin{equation}
\sum_{j<k}X^iw_{ijk}dx^j\w dx^k=\sum_{j<k}\lr{H_kG_j-H_jG_k}dx^j\w dx^k,
\end{equation}
or,
\begin{equation}
X^iw_{ijk}=H_kG_j-H_jG_k,~~~~j,k=1,...,n.\label{Xw}
\end{equation}
Similarly, equation \eqref{w3} gives
\begin{equation}
dw=-\sum_{i<j<k<\ell}\lr{\frac{\p w_{ijk}}{\p x^\ell}+\frac{\p w_{i\ell j}}{\p x^k}+\frac{\p w_{ik\ell}}{\p x^j}+\frac{\p w_{j\ell k}}{\p x^i}}dx^i\w dx^j\w dx^k\w dx^\ell=0,
\end{equation}
or,
\begin{equation}
\frac{\p w_{ijk}}{\p x^\ell}+\frac{\p w_{i\ell j}}{\p x^k}+\frac{\p w_{ik\ell}}{\p x^j}+\frac{\p w_{j\ell k}}{\p x^i}=0,~~~~i,j,k,\ell=1,...,n.\label{dw0}
\end{equation}
At this point, a notion of inverse for third order tensors of the type $w_{ijk}$ is needed. 
Suppose that there exists a skew-symmetric third order contravariant tensor $\mf{J}$ with components $\mf{J}^{jk\ell}$ such that
\begin{equation}
\sum_{j<k}w_{ijk}\mf{J}^{jk\ell}=\delta_i^{\ell},~~~~i,\ell=1,...,n.\label{invw}
\end{equation}
Then, we say that $\mf{J}$ is the inverse of $w$. 
More generally, we say that $\mf{J}\in\bigwedge^3 T\Omega$ is a weak inverse of $w$ whenever the solution $X$ of the equation $i_Xw=-dH\w dG$ can be cast in the form $X^i=\mf{J}^{ijk}G_jH_k$ (the notion of weak invertibility will be discussed in detail in a subsequent publication). 
Multiplying both sides of equation \eqref{Xw} by $\mf{J}^{jk\ell}$, summing over repeated indexes, 
and using \eqref{invw}, one obtains
\begin{equation}
X^\ell=\sum_{j<k}\mf{J}^{jk\ell}\lr{H_kG_j-H_jG_k}=\mf{J}^{\ell jk}G_jH_k,~~~~\ell=1,...,n.
\end{equation}
This shows that system \eqref{w1} leads to the set of equations \eqref{GHEoM} if the inverse $\mf{J}$ exists. 
Let us derive necessary conditions for the existence of the inverse $\mf{J}$. 
First, observe that tensors like $w_{ijk}$ can be thought of as matrices that
have rows, columns, and `depth'. For example, the Levi-Civita symbol $\epsilon_{ijk}$ can be represented as
\begin{equation}
\begin{bmatrix} 0&0&0\\0&0&1\\0&-1&0\end{bmatrix}_{1},~~~~\begin{bmatrix}0&0&-1\\0&0&0\\1&0&0\end{bmatrix}_{2},~~~~\begin{bmatrix}0&1&0\\-1&0&0\\0&0&0\end{bmatrix}_{3}.
\end{equation}
Here, each matrix is numbered by the index $i$ (the depth of the matrix), while rows and columns by the indexes $j$ and $k$.
To the tensor $\epsilon_{ijk}$ we can also assign in a unique manner a conventional matrix having $n$ rows and $n^2$ columns as follows: 
\begin{equation}
\begin{bmatrix}
0&0&0&0&0&1&0&-1&0\\
0&0&-1&0&0&0&1&0&0\\
0&1&0&-1&0&0&0&0&0
\end{bmatrix}.
\end{equation}
Here, the rows are numbered by the index $i$, while columns by the pair $j,k$.   
The components of this $n\times n^2$ matrix will be denoted as $\epsilon_{i(jk)}$. 
Notice that the same construction applies to arbitrary tensors $A_{ijk}$, 
the associated $n\times n^2$ matrix being $A_{i\lr{jk}}$. 
Non-square matrices do not have a classical inverse. However, 
they may possess a left or right inverse \cite{LRinv}. 
In particular, given a $n\cp n^2$ matrix $A$ with components $A_{i\lr{jk}}$ of rank $n$,
the matrix $A$ has a right inverse, i.e. a matrix $B$ with components $B^{\lr{jk}\ell}$ such that $AB=2I_{n}$, 
where $I_n$ is the $n$-dimensional identity matrix \cite{LRinv} (the factor $2$ before $I_n $ is useful when handling skew-symmetric tensors).  
As an example, consider the matrix $\epsilon_{i\lr{jk}}$ above.
Evidently, its rank is $n=3$ because three of the columns are linearly independent vectors in $\mathbb{R}^{3}$. 
Hence, $\epsilon_{i\lr{jk}}$ has a right inverse. The components of the
right inverse are given by $B^{\lr{jk}\ell}=\epsilon^{\lr{jk}\ell}$ because
\begin{equation}
\epsilon_{ijk}\epsilon^{jk\ell}=2\sum_{j<k}\epsilon_{ijk}\epsilon^{jk\ell}=2\delta_{i}^{\ell},~~~~i,\ell=1,2,3. 
\end{equation} 
Therefore, the notion of invertibility \eqref{invw} for the tensor $w_{ijk}$ is related 
to the existence of a right inverse for the $n\times n^2$ matrix $w_{i\lr{jk}}$.  
The right inverse $\mf{J}^{\lr{jk}\ell}$ exists when $w_{i\lr{jk}}$ has rank $n$. 
In such case $w_{i\lr{jk}}$ is the left inverse of $\mf{J}^{\lr{jk}\ell}$. 

Next, we move to the Jacobi identity. 
In the present construction, the Jacobi identity is nothing but the closure condition \eqref{dw0} for the symplectic $3$-form $w$  
expressed in terms of the inverse $\mf{J}$. If the Jacobi identity is satisfied, we say that $\mf{J}$ is a generalized Poisson operator.  
Although there is no simple way to write \eqref{dw0} in terms of the components $\mf{J}^{ijk}$ 
(except directly substituting the expressions of the components $w_{ijk}$ as functions of $\mf{J}^{ijk}$),  
a necessary condition for equation \eqref{dw0} to hold where the components $w_{ijk}$ are partially removed can be derived as follows. 
Multiplying each side of \eqref{dw0} by $-\mf{J}^{\alpha ij}\mf{J}^{\beta k\ell}$ and summing over repeated indexes, we have:
\begin{equation}
\begin{split}
0=&-\mf{J}^{\alpha ij}\mf{J}^{\beta k\ell}\lr{\frac{\p w_{ijk}}{\p x^\ell}+\frac{\p w_{i\ell j}}{\p x^k}+\frac{\p w_{ik\ell}}{\p x^j}+\frac{\p w_{j\ell k}}{\p x^i}}\\
=&w_{ijk}\mf{J}^{\beta k\ell}\frac{\p\mf{J}^{\alpha ij}}{\p x^{\ell}}+w_{i\ell j}\mf{J}^{\beta k\ell}\frac{\p\mf{J}^{\alpha ij}}{\p x^{k}}+w_{ik\ell}\mf{J}^{\alpha ij}\frac{\p\mf{J}^{\beta k\ell}}{\p x^j}+w_{j\ell k}\mf{J}^{\alpha ij}\frac{\p\mf{J}^{\beta k\ell}}{\p x^i}\\
=&2w_{ijk}\mf{J}^{\beta k\ell}\frac{\p\mf{J}^{\alpha ij}}{\p x^\ell}+2w_{ik\ell}\mf{J}^{\alpha ij}\frac{\p\mf{J}^{\beta k\ell}}{\p x^j}\\
=&2w_{ijk}\lr{\mf{J}^{\beta k\ell}\frac{\p\mf{J}^{\alpha ij}}{\p x^\ell}+\mf{J}^{\alpha i\ell}\frac{\p\mf{J}^{\beta j k}}{\p x^\ell}}\\
=&2\sum_{i<j}w_{ijk}\lr{2\mf{J}^{\beta k\ell}\frac{\p\mf{J}^{\alpha ij}}{\p x^\ell}+\mf{J}^{\alpha i\ell}\frac{\p\mf{J}^{\beta j k}}{\p x^\ell}-\mf{J}^{\alpha j\ell}\frac{\p\mf{J}^{\beta i k}}{\p x^\ell}}\\
=&4\sum_{i<j<k}w_{ijk}\lr{
\mf{J}^{\beta k\ell}\frac{\p\mf{J}^{\alpha ij}}{\p x^\ell}+\mf{J}^{\alpha i\ell}\frac{\p\mf{J}^{\beta j k}}{\p x^\ell}+\mf{J}^{\alpha j\ell}\frac{\p\mf{J}^{\beta k i}}{\p x^\ell}
+\mf{J}^{\beta j\ell}\frac{\p\mf{J}^{\alpha ki}}{\p x^\ell}+\mf{J}^{\alpha k\ell}\frac{\p\mf{J}^{\beta i j}}{\p x^\ell}
+\mf{J}^{\beta i\ell}\frac{\p\mf{J}^{\alpha jk}}{\p x^\ell}
},\label{NCJ}
\end{split}
\end{equation}
where $\alpha,\beta=1,...,n$. Equation \eqref{NCJ} is a necessary condition that must be satisfied 
for the symplectic $3$-form $w$ to be closed. 
Observe that any invertible skew-symmetric third order tensor $\mf{J}^{ijk}$ with constant entries automatically satisfies \eqref{NCJ}, 
and equation \eqref{dw0} as well. 
Hence, the generalization of Hamiltonian mechanics following from the present construction is 
weaker than that resulting from enforcing the axioms \eqref{NAx}, since skew-symmetric third order tensors 
with constant entries do not satisfy the fundamental identity \eqref{FI} in general. 
Equation \eqref{NCJ} becomes a sufficient condition when $w$ and $\mf{J}$ satisfy the additional property
\begin{equation}
w_{ijk}\mf{J}^{k\ell \mu}=f\lr{\delta_i^\ell\delta_j^\mu-\delta_i^\mu\delta_j^\ell},~~~~i,j,\ell,\mu=1,...,n,~~~~f\in C^{\infty}\lr{\Omega}.\label{Inv2}
\end{equation}
Indeed, using \eqref{Inv2} one can recover the closure condition \eqref{dw0} by multiplying 
each side of \eqref{NCJ} by $w_{\sigma\tau\alpha}w_{\pi\rho\beta}$ and summing on repeated indexes. 
Notice that the property \eqref{Inv2} is satisfied by the Levi-Civita symbol. Indeed,
\begin{equation}
\epsilon_{ijk}\epsilon^{k\ell\mu}=\delta_i^\ell\delta_j^\mu-\delta_i^\mu\delta_j^\ell,~~~~i,j,\ell,\mu=1,...,3.
\end{equation}
When \eqref{Inv2} holds, the right hand side of equation \eqref{NCJ} reduces to
\begin{equation}
2f\lr{\delta_{i}^\ell \delta_j^\beta-\delta_{i}^\beta \delta_j^\ell}\frac{\p\mf{J}^{\alpha ij}}{\p x^{\ell}}+2f\lr{\delta_{j}^\ell \delta_k^\alpha-\delta_{j}^\alpha \delta_k^\ell}\frac{\p\mf{J}^{\beta jk}}{\p x^{\ell}}=4f\lr{\frac{\p\mf{J}^{\alpha \ell\beta}}{\p x^{\ell}}+\frac{\p\mf{J}^{\beta \ell\alpha}}{\p x^{\ell}}}=0,~~~~\alpha,\beta=1,...,n,
\end{equation}
This result implies that a smooth invertible $3$-form $w$ satisfying \eqref{Inv2} is closed
and therefore defines a symplectic $3$-form. 
We stress that, however, equation \eqref{Inv2} is only a sufficient condition for the closure of $w$, and it is not needed for $w$ to possess a right inverse. 

A final remark concerns how to identify 
a generalized Hamiltonian system from a given set of equations $X\in T\Omega$. 
While in classical Hamiltonian mechanics the Hamiltonian nature of a dynamical system
can be verified by checking the Poisson operator axioms, and, in particular, the Jacobi identity \eqref{JI2}, 
the procedure for the generalized case requires the determination of the symplectic $3$-form $w$.
Once the symplectic $3$-form $w$ and the Hamiltonians $G$ and $H$ such that \eqref{w1} holds
have been identified, one must check that $w$ is a closed differential form, i.e. that $dw=0$.
Notice that, in the presence of Casimir invariants, one may need to 
reduce the dynamics to a Casimir leaf in order to find the Hamiltonian structure. 



\section{Lie-Darboux and Liouville Theorems in Generalized Phase Space}

This section is dedicated to the proof of Lie-Darboux  type theorems (theorems 3 and 4) in the generalized Hamiltonian framework with 
a three-dimensional phase space $N=3$, $n\geq 3$. A direct consequence of these theorems is the local 
existence of an invariant (Liouville) measure. 
In particular, we prove a Lie-Darboux theorem (theorem 4) for closed $3$-forms of the type $w=\omega\w dG$, with $\omega$ a $2$-form and $dG$ an exact $1$-form.
A further result (proposition 1) is also proven explaining the relevance of this class of $3$-forms for generalized Hamiltonian mechanics, intended as the ideal dynamics of systems with 2 invariants. 

Below, we consider a smooth manifold $\Omega$ of dimension $n$ and assume smoothness of the involved quantities.
We have the following:

\begin{theorem}
Let $w\in\bigwedge^3T^{\ast}\Omega$ be a 
closed $3$-form. 
Let $w_{ijk}$, $i,j,k=1,...,n$ denote the components of $w$ with respect to a coordinate system $\lr{x^1,...,x^n}$ in  $\Omega$,
\begin{equation}
w=\sum_{i<j<k}w_{ijk}dx^i\w dx^j\w dx^k.
\end{equation}
Suppose that the $n\cp n^2$ matrix $w_{i\lr{jk}}$ has rank $n$.  
Take a sufficiently small neighborhood $U$ of any $\bol{x}_0\in\Omega$. 
Let $w_0=w_{0ijk}dy^i\w dy^j\w dy^k$ denote the constant (flat) $3$-form with components $w_{0ijk}=w_{ijk}\lr{\bol{x}_0}$ in a coordinate system $\lr{y^1,...,y^n}$.   
Further assume that 
Moser's $2$-form $\sigma_t$, $t\in\lrs{0,1}$, such that $d\sigma_t=d  w_t/d t$ in $U$, belongs to the image of the map $\hat{w}_t:T\Omega\rightarrow \bigwedge^2 T^{\ast}\Omega$ defined by $\hat{w}_t\lr{X_t}=-i_{X_t}w_t$, i.e. $\sigma_t\in {\rm Im}\lr{\hat{w}_t}$ for some $X_t\in T\Omega$.    
Then, 
$w_t$ has a right inverse $\mf{J}_t$ in $U$. Furthermore,  
there exists a 
coordinate change  $\lr{x^1,...,x^n}\rightarrow\lr{y^1,...,y^n}$
generated by the vector field $X_t=-\mf{J}^{jk\ell}_t\sigma_{tjk}\p_{\ell}$  
such that 
\begin{equation}
w=w_0~~~~{\rm in}~~U.\label{wc}
\end{equation}
\end{theorem}
\begin{proof}
We follow the steps of the classical proof of the Lie-Darboux theorem based on Moser's method \cite{DeLeon89,Mos1}.
Let $w_0$ denote the constant form on $\mathbb{R}^{n}$, 
\begin{equation}
w_0=\sum_{i<j<k}A_{ijk}dy^i\w dy^j\w dy^k,
\end{equation}
with $A_{ijk}$, $i,j,k=1,...,n$, real constants. 
Consider a family of vector fields $X_t\in T\Omega$, $0\leq t\leq 1$, defined in a neighborhood $U$ of a point $\bol{x}_0\in\Omega$ that generates
a one-parameter group of diffeomorphisms $g_t$ as follows, 
\begin{equation}
\frac{d}{dt}g_t\lr{\bol{x}_0}=X_t\lr{g_t\lr{\bol{x}_0}},~~~~g_0\lr{\bol{x}_0}=\bol{x}_0.
\end{equation}  
Next, define the family of $3$-forms
\begin{equation}
w_t=w_0+t\lr{w-w_0}.
\end{equation}
We wish to obtain $X_t$, and thus $g_t$, so that the following property is satisfied
\begin{equation}
g_t^\ast w_t=w_0.\label{pb}
\end{equation}
Here $g_t^\ast w_t$ denotes the pullback of $w_t$ by $g_t$. 
Equation \eqref{pb} implies that
\begin{equation}
\frac{d}{dt}g_t^\ast w_t=g_t^\ast\lr{\frac{dw_t}{dt}+di_{X_t}w_t}=0,\label{pb2} 
\end{equation}
where we used the fact that $w_t$ is a closed differential form. 
By the Poincar\'e lemma, in a sufficiently small neighborhood $W$ of $\bol{x}_0$, 
the closed differential form $dw_t/dt$ is exact, i.e. there exists a $2$-form $\sigma_t=\sum_{j<k}\sigma_{tjk}dx^j\w dx^k$ such that
\begin{equation}
\frac{dw_t}{dt}=d\sigma_t~~~~{\rm in}~~W.
\end{equation}
Hence, equation \eqref{pb2} can be solved in $W$ by finding a vector field $X_t$ satisfying
\begin{equation}
\sigma_t=-i_{X_t}w_t.\label{Xt1}
\end{equation} 
In components, equation \eqref{Xt1} is equivalent to
\begin{equation}
\sigma_{tjk}=-X_t^iw_{tijk},~~~~j,k=1,...,n.\label{Xt2}
\end{equation}
Next, observe that by hypothesis the $n\cp n^2$ matrix $w_{i\lr{jk}}$ has rank $n$. 
Similarly, setting the components of $w_0$ in the variables $\lr{y^1,...,y^n}$ 
to be given by the constant tensor $A_{ijk}=w_{ijk}\lr{\bol{x}_0}$,  
the $n\cp n^2$ matrix $A_{i\lr{jk}}$ has rank $n$.
Furthermore, at the point $\bol{x}_0$ we may assume $w\lr{\bol{x}_0}=w_0\lr{\bol{x}_0}$
because the matrices $w_{ijk}$ and $A_{ijk}$ coincide there. 
Then, for $0\leq t\leq 1$, 
\begin{equation}
w_t\lr{\bol{x}_0}=w_0\lr{\bol{x}_0}.
\end{equation}
This implies that the $n\cp n^2$ matrix $w_{ti\lr{jk}}\lr{\bol{x}_0}$ has rank $n$ at $\bol{x}_0$. 
By continuity of the tensor $w_{tijk}$ 
it follows that there exists a neighborhood $V$ of $\bol{x}_0$ where 
the rank of the $n\cp n^2$ matrix $w_{ti\lr{jk}}$ is $n$. 
Define $U=W\cap V$. Then, 
the matrix $w_{ti\lr{jk}}$ has a right-inverse inverse $\mf{J}_t^{\lr{jk}\ell}$. 
Since by hypothesis $\sigma_t\in{\rm Im}\lr{\hat{w}_t}$, 
in $U$ the solution $X_t$ of equation \eqref{Xt2} can be written in terms of the right inverse as
\begin{equation}
X_t^\ell=-\mf{J}_t^{jk\ell}\sigma_{tjk},~~~~\ell=1,...,n.\label{Xt3}
\end{equation}
The vector field \eqref{Xt3} gives the desired local change of coordinates. 
\end{proof}

\begin{theorem}
Let $\omega\in\bigwedge^2{T^{\ast}\Omega}$ denote a (not necessarily closed) $2$-form of constant rank $2m=n-s$ and $dG\in T{\Omega^{\ast}}$ an exact $1$-form such that $\lr{x^1,...,x^n}$ defines a coordinate system in $\Omega$ with $x^n=G$. Define the $3$-form $w\in\bigwedge^3{T^{\ast}\Omega}$ as $w=\omega\w dG$ 
and suppose that $dw=0$. Then, for every $\bol{x}_0\in \Omega$ there exist a neighborhood $U$ of $\bol{x}_0$ and a coordinate system $\lr{p^1,...,p^\ell,q^1,...,q^\ell,G^1,...,G^\tau}$ with $n=2\ell+\tau$ such that 
\begin{equation}
w=\omega_0\w dG,~~~~\omega_0=\sum_{i=1}^\ell dp^i\w dq^i~~~~{\rm in}~~U,
\end{equation}
with $\ell=m$ if $\p_n\in{\rm ker}\lr{\omega}$ and $2\ell\leq n-1$ if $\p_n\notin {\rm ker}\lr{\omega}$. 
Furthermore, given a $1$-form $dH\in {T^{\ast}\Omega}$, linearly independent from $dG$, the phase space measure $d\Pi=dp^1\w ...\w dp^{\ell}\w dq^1\w ...\w dq^{\ell}\w dG^1\w...\w dG^\tau$ is an invariant measure in $U$ for the generalized Hamiltonian system  $X\in T{U}$ such that
\begin{equation}
i_Xw=-dH\w dG,\label{sys}
\end{equation}
provided that such $X$ exists. In addition,
\begin{equation}
i_X\omega_0=-\tilde{d}H~~~~{\rm in}~~\Sigma_{G},
\end{equation}
where $\Sigma_{G}=\lrc{\bol{x}\in U:G\lr{\bol{x}}=c\in\mathbb{R}}$ and $\tilde{d}$ denotes the differential operator on $\Sigma_{G}$. 
\end{theorem}

\begin{proof}
Since $dw=d\omega\w dG=0$, it follows that $\tilde{d}\omega=0$ in any level set $\Sigma_{G}$. 
On the other hand, 
\begin{equation}
\omega=\sum_{i<j}\omega_{ij}dx^{i}\w dx^j=\sum_{i=1}^{n-1}\omega_{in}dx^i\w dG+\sum_{i<j}^{n-1}\omega_{ij}dx^{i}\w dx^j.\label{omega}
\end{equation}
Define $\tilde{\omega}=\sum_{i<j}^{n-1}\omega_{ij}dx^{i}\w dx^j$. Evidently $w=\tilde{\omega}\w dG$.  Since $w$ is closed, this implies $\tilde{d}\tilde{\omega}=0$. If $\p_n\in{\rm ker}\lr{\omega}$, from \eqref{omega} it follows that $\omega=\tilde{\omega}$ and ${\rm rank}\lr{\tilde{\omega}}=2m=n-s$. Conversely,   
if $\p_n\notin{\rm ker}\lr{\omega}$ the forms $\omega$ and  $\tilde{\omega}$ are different, with ${\rm rank}\lr{\tilde{\omega}}=2\ell=n-1-u\leq n-1$. 
In either case, by the Lie-Darboux theorem for all $\bol{x}_0\in\Omega$ there exist a neighborhood $U$ of $\bol{x}_0$ and $n-1$ local coordinates $\lr{p^1,...,p^\ell,q^1,...,q^\ell,G^1,...,G^{s-1}}$ or $\lr{p^1,...,p^\ell,q^1,...,q^\ell,G^1,...,G^{u}}$  
such that 
\begin{equation}
\tilde{\omega}=\omega_0=\sum_{i=1}^{\ell}\tilde{d}p^i\w \tilde{d}q^i~~~~{\rm in}~~\Sigma_{G}. 
\end{equation}
By smoothness, the coordinates  $p^i,q^i:C^{\infty}\lr{\Sigma_{G}}\rightarrow\mathbb{R}$ also define smooth functions 
$p^i,q^i:C^{\infty}\lr{U}\rightarrow\mathbb{R}$. Then,
\begin{equation}
w=\sum_{i=1}^{\ell}\tilde{d}p^i\w\tilde{d}q^i\w dG=
\sum_{i=1}^{\ell}{d}p^i\w{d}q^i\w dG=\omega_0\w dG.
\end{equation}
Now consider a solution $X\in TU$ of system \eqref{sys}. Recalling that, by hypothesis, $dH$ and $dG$ are linearly independent and noting that $0=i_Xi_Xw=-\lr{i_XdH} dG+\lr{i_XdG} dH$, it follows that $i_XdH=i_XdG=0$. On the other hand, $i_Xw=i_X\omega_0\w dG=-\tilde{d}H\w dG$, which implies
\begin{equation}
i_X\omega_0=-\tilde{d}H~~~~{\rm in}~~\Sigma_{G}.\label{sys2}
\end{equation}
Since $\tilde{d}\tilde{\omega}=0$, equation \eqref{sys2} defines a Hamiltonian system with invariant measure $\lr{\bigwedge_{i=1}^{\ell}\tilde{d}p^i\w \tilde{d}q^i}\w \tilde{d}G^1\w ...\w \tilde{d}G^{s-1}$ if $\p_n\in{\rm ker}\lr{\omega}$ or 
$\lr{\bigwedge_{i=1}^{\ell}\tilde{d}p^i\w \tilde{d}q^i}\w \tilde{d}G^1\w ...\w \tilde{d}G^{u}$
if $\p_n\notin{\rm ker}\lr{\omega}$ on $\Sigma_{G}$.  
Set $\lr{G^1,...,G^{\tau}}=\lr{G^1,...,G^{s-1},G}$ if $\p_n\in{\rm ker}\lr{\omega}$ and $\lr{G^1,...,G^{\tau}}=\lr{G^1,...,G^{u},G}$ if $\p_n\notin{\rm ker}\lr{\omega}$. It follows that
\begin{equation}
d\Pi=\lr{\bigwedge_{i=1}^{\ell}dp^i\w dq^i}\w dG^1\w...\w dG^\tau,
\end{equation}
defines an invariant measure for $X$ in $U$. 
\end{proof}

\begin{proposition}
Let $w\in\bigwedge^3T^{\ast}\Omega$ denote a closed 3-form and $dG\in  T^{\ast}\Omega$ an exact $1$-form such that $\lr{x^1,...,x^{n}}$ defines a coordinate system in $\Omega$ with $x^n=G$.  
Suppose that for any exact 1-form $dH\in T^{\ast}\Omega$ such that $dH$ and $dG$ are linearly independent there exists a vector field $X\in T\Omega$ solving 
\begin{equation}
i_{X}w=-dH\w dG.\label{EoM33}
\end{equation} 
Further assume 
 that the $2$-tensor $\omega_{ij}=w_{ijn}$ is invertible on the level sets $\Sigma_{G}=\lrc{\bol{x}\in\Omega:G\lr{\bol{x}}=c\in\mathbb{R}}$ with inverse $\mc{J}\in\bigwedge^2 T\Sigma_G$ such that
\begin{equation}
\sum_{j=1}^{n-1}\omega_{ij}\mc{J}^{jk}=\delta_i^k,~~~~i,k=1,...,n-1.\label{invo3}
\end{equation}
Then, on each level set $\Sigma_{G}$ there exists a closed 2-form $\tilde{\omega}\in\bigwedge^2 T^{\ast}\Sigma_G$ such that
\begin{equation}
i_{X}\tilde{\omega}=-\tilde{d}H,
\end{equation}
where $\tilde{d}$ denotes the differential operator on $\Sigma_{G}$. Furthermore, 
\begin{equation}
w=\tilde{\omega}\w dG, 
\end{equation}
and
\begin{equation}
X=\mf{J}\lr{dH,dG},
\end{equation}
with $\mf{J}=\mc{J}\w \p_n$. 
\end{proposition}

\begin{proof}
Equation \eqref{EoM33} implies that
\begin{equation}
X^iw_{ijk}=H_kG_j-H_jG_k.
\end{equation}
Since $x^n=G$, we have $X^iw_{ijn}=-H_j$ for $j=1,...,n-1$. 
Hence, 
\begin{equation}
i_X\omega=-\tilde{d}H,\label{EoM44}
\end{equation}
where $\omega\in\bigwedge^2T^{\ast}\Omega$ is the 2-form $\omega=\sum_{i<j}\omega_{ij}dx^i\w dx^j$ and $\tilde{d}$ is the differential operator on the level sets $\Sigma_G$. 
Since $i_XdG=0$, the equations of motion \eqref{EoM33} and \eqref{EoM44} give 
\begin{equation}
i_X\lr{w-\omega\w dG}=0.\label{EoM55}
\end{equation}
Let $\xi\in\bigwedge^3 T^{\ast}\Omega$  denote a 3-form such that 
\begin{equation}
i_X\xi=\sum_{j<k}X^i\xi_{ijk}dx^j\w dx^k=\sum_{j<k}\mc{J}^{i\ell}H_{\ell}\xi_{ijk}dx^j\w dx^k=0.
\end{equation}
It follows that
\begin{equation}
w-\omega\w dG=\xi.
\end{equation}
On the other hand $w$, and thus $\xi$, cannot depend on $H$ by construction. Therefore, we must have
\begin{equation}
\mc{J}^{i\ell}\xi_{ijk}=0~~~~\forall~~\ell=1,...,n-1,~~j,k=1,...,n.
\end{equation}
However, the tensor $\mc{J}$ is invertible on $\Sigma_G$ by hypothesis (equation \eqref{invo3}). Hence, $\xi=0$ must be the zero 3-form. 
Then, equation \eqref{EoM55} can be expressed in the form
\begin{equation}
w=\omega\w dG.
\end{equation}
Using the closure of $w$, we therefore arrive at the equation
\begin{equation}
0=d\omega\w dG=\tilde{d}\omega\w dG.\label{dodG}
\end{equation}
However, the 3-form  
$\tilde{d}\omega$ can be expanded on the basis elements $dx^i\w dx^j\w dx^k$ with $i<j<k$ and $i,j,k=1,...,n-1$, which satisfy $dx^i\w dx^j \w dx^k\w dG\neq 0$. It follows that
\begin{equation}
\tilde{d}{\omega}=0,
\end{equation}
i.e. the 2-form ${\omega}\in\bigwedge^2T^{\ast}\Sigma_G$ is closed. The theorem is proven by noting that $X=\mf{J}\lr{dH,dG}$ with $\mf{J}=\mc{J}\w \p_n$ and by setting $\tilde{\omega}=\omega$. 
\end{proof}
We remark that proposition 1 applies to the case in which $n$ is odd, because the invertibility of $\omega$ implies that $n=2m+1$ for some $m\in\mathbb{N}$. The case in which $n$ is even can be handled by a further integrability assumption on the kernel of $\omega$.

We conclude this section with some observation concerning invertible $3$-forms $w$ that admit a constant (flat) expression  $w_0=\sum_{i<j<k}A_{ijk}dy^i\w dy^j\w dy^k$, $A_{ijk}\in\mathbb{R}$, by a suitable change of coordinates.    
First, any 
such form induces 
an invariant (Liouville) measure given by the phase space volume
\begin{equation}
d\Xi=dy^1\w ...\w dy^n.
\end{equation}
To see this observe that  
the equations of motion \eqref{w1} take the local form
\begin{equation}
A_{ijk}\frac{dy^i}{dt}=\frac{\p G}{\p y^j}\frac{\p H}{\p y^k}-\frac{\p G}{\p y^k}\frac{\p H}{\p y^j},~~~~j,k=1,...,n.\label{locEq1}
\end{equation}
The existence of the inverse $B^{jk\ell}$ 
of $A_{ijk}$ implies that 
\begin{equation}
\frac{dy^\ell}{dt}=B^{\ell jk}\frac{\p G}{\p y^j}\frac{\p H}{\p y^k},~~~~\ell=1,...,n.\label{locEq2}
\end{equation}
It follows that,
\begin{equation}
\mf{L}_Xd\Xi=\frac{\p}{\p y^i}\lr{\frac{dy^i}{dt}}d\Xi=B^{ijk}\lr{\frac{\p^2 G}{\p y^i\p y^j}\frac{\p H}{\p y^k}+\frac{\p G}{\p y^k}\frac{\p^2 H}{\p y^i\p y^j}}d\Xi=0.
\end{equation}
Notice that, in addition to $d\Xi$, the closure condition implies that the symplectic $3$-form $w$ is Lie-invariant as well, 
\begin{equation}
\mf{L}_Xw=di_Xw=-d\lr{dH\w dG}=0.
\end{equation}
Therefore, in the generalized Hamiltonian framework developed here 
both the symplectic form $w$ and the Liouville measure $d\Xi$ 
are preserved as in the classical formulation. 
However, a difference exists with respect to the existence of canonical variables. 
Indeed, while in the classical proof of the Lie-Darboux theorem 
the skew symmetry of the tensor $\omega_{ij}$ associated with the symplectic $2$-form $\omega$ is sufficient to 
ensure that there exists a linear change of basis transforming the skew-symmetric matrix $\omega_{ij}\lr{\bol{x}_0}$
into block diagonal form at any $\bol{x}_0\in\Omega$, an analogous result is not available for third order tensors like $w_{ijk}\lr{\bol{x}_0}$.
For this reason, one cannot guarantee the local invertibility of the tensor $\tilde{w}_{tijk}$ associated with the $3$-form $\tilde{w}_t=\tilde{w}_0+t\lr{w-\tilde{w}_0}$, with
\begin{equation}
\tilde{w}_0=\sum_{i=1}^m dp^i\w dq^i\w dr^i.
\end{equation}  
Hence, the proof of theorem 3 breaks down, i.e. local canonical triplets $\lr{p^i,q^i,r^i}$, $i=1,...,m$, are not available in general. 
Furthermore, even if $n=3m$ with $m$ an integer, for canonical triplets of variables 
$\lr{p^1,...,p^m,q^1,...,q^m,r^1,...,r^m}$ to  
locally exist in the neighborhood of all points $\bol{x}_0\in\Omega$, it is not sufficient that 
$w_{ijk}\lr{\bol{x}_0}$ can be transformed by a linear change of basis into the generalized Levi-Civita symbol $E_{ijk}$ (the covariant version of the 
tensor \eqref{Jloc} introduced in section 3), because the applicability of theorem 3 also requires that the relevant Moser $2$-form $\tilde{\sigma}_t$ belongs to the image of the map $\hat{\tilde{w}}_t$. 

\section{Concluding Remarks}

In this paper, we have formulated a generalization of classical Hamiltonian mechanics 
to a three-dimensional phase space. 
The generalized theory relies on a symplectic $3$-form $w$ and a pair of Hamiltonian functions $G,H$.  
The closure of the symplectic $3$-form $w$ ensures that there exist a local coordinate system $\lr{y^1,...,y^n}$
such that the components of $w$ are constants, and the volume form $d\Xi=d y^1\w ...\w dy^n$ is Lie-invariant
with respect to the generalized Hamiltonian flow. 
The invariant volume element can be used to define the phase space measure of statistical mechanics. 
When the components of $w$ define an $n\cp n^2$ matrix of rank $n$, the form $w$ has a right inverse. 
If the right-inverse corresponds to a an antisymmetric $3$-tensor,  
it defines a generalized Poisson operator $\mf{J}$. 
In analogy with the classical theory, 
the Jacobi identity is identified with the closure of the symplectic $3$-form $w$ 
expressed in terms of $\mf{J}$. As a consequence, the Jacobi identity
is a weaker condition than the fundamental identity, and any 
skew-symmetric third order contravariant tensor with constant entries defines a generalized Poisson operator. 
Skew-symmetric third order tensors also exhibit a richer kernel structure. 
In particular, semi-Casimir invariants produce conservation laws when
they appear together in the generalized Poisson bracket. 
 
There are aspects of the theory that deserve further investigation. 
On one hand it is desirable to identify the mathematical conditions
for the existence of a linear change of basis 
transforming the constant skew-symmetric third order covariant tensor $w_{ijk}\lr{\bol{x}_0}$, $\bol{x}_0\in\Omega$,
into the generalized Levi-Civita symbol $E^{ijk}$. 
Such conditions will provide necessary conditions for the
local existence of canonical triplets $\lr{p^i,q^i,r^i}$, $i=1,...,m$. 
On the other hand, it would be useful to identify
systems that are Hamiltonian in the generalized sense,
but that do not possess a classical Hamiltonian structure.
Such systems would give the theory a physical foundation. 
Finally, the quantization of the generalized Poisson bracket
resulting from the present construction has not been investigated.

\section*{Acknowledgment}
The research of NS was partially supported by JSPS KAKENHI Grant No. 17H01177 
and by the Osaka City University Advanced Mathematical Institute (MEXT Joint Usage/Research Center on Mathematics and Theoretical Physics JPMXP0619217849).

N.S. gratefully acknowledges N. Duignan (University of Sydney) for  valuable insight in helping identify the issue with the formulation of Theorem 3 
in the previous version of this manuscript, and for providing useful suggestions on the present version.

\section*{Data Availability}

The data that support the findings of this study are available from
the corresponding author upon reasonable request.


\begin{thebibliography}{99}

\bibitem{Morrison80}  P.J. Morrison and J.M. Greene, \ti{Noncanonical Hamiltonian density formulation of hydrodynamics and ideal magnetohydrodynamics}, Phys. Rev. Lett. \tb{45}, pp. 790–794 (1980).

\bibitem{Morrison82}  P.J. Morrison, \ti{Poisson brackets for fluids and plasmas}, in: M. Tabor, Y. Treve (Eds.), Mathematical Methods in Hydrodynamics and Integrability in Dynamical
Systems, in: American Institute of Physics Conference Proceedings \tb{88}, American Institute of Physics, New York, pp. 13–46 (1982).

\bibitem{Olv}  P.J. Olver, \ti{Applications of Lie Groups to Differential Equations}, 2nd ed., Springer-Verlag, New York (1993).

\bibitem{Littlejohn82} R. Littlejohn, \ti{Singular Poisson tensors}, in: M. Tabor, Y. Treve (Eds.), Mathematical Methods in Hydrodynamics and Integrability in Dynamical Systems, in: American Institute of Physics Conference Proceedings, \tb{88}, American Institute of Physics, New York, pp. 47–66 (1982).

\bibitem{Arnold89} V.I. Arnold, \ti{Mathematical Methods of Classical Mechanics}, second ed., Springer, New York, pp. 230–232 (1989).

\bibitem{DeLeon89} M. de Le\'on, \ti{Methods of Differential Geometry in Analytical Mechanics}, Elsevier, New York, pp. 250–253 (1989).

\bibitem{Morrison98} P. J. Morrison, \ti{Hamiltonian Description of the Ideal Fluid}, Rev. Mod. Phys. \tb{70}, 467 (1998).

\bibitem{Nambu} Y. Nambu, \ti{Generalized Hamiltonian Dynamics}, Phys. Rev. D \tb{7}, 8 (1973).

\bibitem{Ho} P. M. Ho and Y. Matsuo, \ti{The Nambu bracket and M-theory}, 
Prog. Theor. Exp. Phys. \tb{2016}, 06A104 (2016).

\bibitem{Bial91} I. Bialynicki-Birula and P. J. Morrison, \ti{Quantum Mechanics as a Generalization of Nambu Dynamics to the Weyl-Wigner Formalism}, 
Phys. Lett. A \tb{158}, 453 (1991).

\bibitem{Bloch13} A. M. Bloch, P. J. Morrison, and T. S. Ratiu, \ti{Gradient Flows in the
Normal and K\"ahler Metrics and Triple Bracket Generated Metriplectic
Systems}, in Recent Trends in Dynamical Systems, eds. A. Johann et al.,
Springer Proceedings in Mathematics \& Statistics \tb{35}, pp. 371-415 (2013).

\bibitem{Tak} L. Takhtajan, \ti{On Foundation of the Generalized Nambu Mechanics}, Commun. Math. Phys. \tb{160}, pp. 295-315 (1994).

\bibitem{Cha} R. Chatterjee and L. Takhtajan, \ti{Aspects of classical and quantum Nambu mechanics}, Lett. Math. Phys. \tb{37}, pp. 475-482 (1996).

\bibitem{Awata} H. Awata, M. Li, D. Minic, and T. Yoneya, \ti{On the quantization of Nambu brackets}, J. High Energy Phys. \tb{2}, 13 (2001).

\bibitem{Dito} G. Dito, M. Flato, D. Sternheimer, and L. Takhtajan, \ti{Deformation Quantization and Nambu Mechanics}, 
Commun. Math. Phys. \tb{183}, pp. 1-22 (1997).

\bibitem{Bayen} F. Bayen and M. Flato, \ti{Remarks converning Nambu's generalized mechanics}, Phys. Rev. D \tb{11}, 10 (1975). 

\bibitem{Yon} T. Yoneya, \ti{Generalized Hamilton-Jacobi theory of Nambu Mechanics}, Prog. Theor. Exp. Phys. \tb{2017}, 023A01 (2017).

\bibitem{Hor1} A. Horikoshi and Y. Kawamura, \ti{Hidden Nambu mechanics: A variant formulation of Hamiltonian systems}, 
Prog. Theor. Exp. Phys. \tb{2013}, 073A01 (2013).

\bibitem{Hor2} A. Horikoshi, \ti{Hidden Nambu mechanics: Quantum/semiclassical dynamics}, 
Prog. Theor. Exp. Phys. \tb{2019}, 123A02 (2019).

\bibitem{StatMech} E. T. Jaynes, \ti{Information theory and statistical mechanics}, Phys. Rev. \tb{106}, 4, pp. 620-630 (1957)

\bibitem{Ergodic} C. C. Moore, \ti{Ergodic theorem, ergodic theory, and statistical mechanics}, Proc. Natl. Acad. Sci. U. S. A. \tb{112}, 7, pp. 1907-1911 (2015).

\bibitem{Blender} R. Blender and G. Badin, \ti{Hydrodynamic Nambu brackets derived by geometric constraints}, J. Phys. A: Math. Theor. \tb{48}, 105501 (2015).

\bibitem{Nevir} P. N\'evir and R. Blender, \ti{A Nambu representation of incompressible hydrodynamics using helicity and enstrophy},
J. Phys. A: Math. Gen. \tb{26}, L1189-L1193 (1993).

\bibitem{LRinv} C. Radhakrishna Rao and Sujit Kumar Mitra, \ti{Generalized inverse of a matrix and its applications}, 
Proceedings of the Sixth Berkeley Symposium on Mathematical Statistics and Probability, Volume 1: Theory of Statistics, pp. 601-620, University of California Press, Berkeley, Calif. (1972). 



\bibitem{Mos1} J. Moser, \ti{On the volume elements on a manifold}, Trans. Am. Mat. Soc. \tb{120}, pp. 286-294 (1965).














\end{thebibliography}
\end{document}